\numberwithin{equation}{section}
\newtheorem{theorem}{Theorem}
\newtheorem{mydef}{Definition}
\newcommand{\seq}[1]{\left \{ {#1} \right \}}
\numberwithin{equation}{section}
\date{}
\begin{document}
\begin{frontmatter}
\title{Adaptive Estimation for Nonlinear Systems using Reproducing Kernel Hilbert Spaces}
\author[PB]{Parag Bobade\fnref{fn1}\corref{C1}}
\ead{paragb4@vt.edu}

\author{Suprotim Majumdar\fnref{fn2}}
\ead{supro21@vt.edu}

\author{Savio Pereira\fnref{fn3}}
\ead{psavio5@vt.edu}

\author{Andrew J. Kurdila\fnref{fn4}}
\ead{kurdila@vt.edu}

\author{John B. Ferris\fnref{fn5}}
\ead{jbferris@vt.edu}

\cortext[C1]{Corresponding Author}
\fntext[fn1]{Graduate Student, Department of Engineering Science and Mechanics, Virginia Tech}
\fntext[fn2]{Graduate Student, Department of Electrical and Computer Engineering, Virginia Tech} 
\fntext[fn3]{Graduate Student, Department of Mechanical Engineering, Virginia Tech}
\fntext[fn4]{Professor, Department of Mechanical Engineering, Virginia Tech} 
\fntext[fn5]{Associate Professor, Department of Mechanical Engineering, Virginia Tech}

\begin{abstract}
This paper extends a conventional, general framework for online adaptive estimation problems for systems governed by unknown nonlinear ordinary differential equations. The central feature of the theory introduced in this paper represents the unknown function as a member of a reproducing kernel Hilbert space (RKHS) and defines a distributed parameter system (DPS) that governs state estimates and estimates of the unknown function. This paper 1) derives sufficient conditions for the existence and stability of the infinite dimensional online estimation problem, 2) derives existence and stability of finite dimensional approximations of the infinite dimensional approximations, and  3) determines sufficient conditions for the convergence of finite dimensional approximations to the infinite dimensional online estimates. A new condition for  persistency of excitation in a RKHS in terms of its evaluation functionals is introduced in the paper  that enables proof of convergence of the finite dimensional approximations of the unknown function in the RKHS. This paper studies two particular choices of the RKHS, those that are generated by exponential functions and those that are generated by multiscale kernels defined from a multiresolution analysis.

\begin{keyword}Adaptive Estimation \sep Reproducing Kernel Hilbert Spaces \sep Distributed Parameter Systems.
\end{keyword}
\end{abstract}         
\end{frontmatter}
	\section{Introduction}
	\label{sec:introduction}
 
 \subsection{Motivation: Road and Terrain Mapping}
 \label{subsec:terrain}
    There  has been a steep rise of interest in the last decade among researchers in academia and the commercial sector in autonomous vehicles and self driving cars. Although adaptive estimation has been studied for some time, applications such as terrain or road mapping continue to challenge researchers to further develop the underlying theory and algorithms in this field. These vehicles are required to sense the environment and navigate surrounding terrain without any human intervention. The environmental sensing capability  of such vehicles  must be  able to navigate off-road conditions  or to respond to other agents in urban settings. As a key ingredient to achieve these goals, it can be critical to have a good {\em a priori} knowledge of the surrounding environment as well as the position and orientation of the vehicle in the environment. 
To collect this data for the construction of terrain maps, mobile vehicles equipped with multiple high bandwidth, high resolution imaging sensors are deployed. The mapping sensors retrieve the terrain data relative to the vehicle  and navigation sensors provide  georeferencing relative to a fixed coordinate system. The geospatial data,  which can include the digital terrain maps acquired from these mobile mapping systems, find applications in emergency response planning and road surface monitoring. Further, to improve the ride and handling characteristic of an autonomous vehicle, it might be necessary that these digital terrain maps have accuracy on a  sub-centimeter scale. 

One of the main areas of improvement in current state of the art terrain modeling technologies is the localization. Since the localization heavily relies on the quality of GPS/GNSS, IMU data, it is important to come up with novel approaches which could fuse the data from multiple sensors to generate the best possible estimate of the environment. Contemporary data acquisition systems used to map the environment generate scattered data sets in time and space. These data sets must be either post-processed or processed online for construction of three dimensional terrain maps. 

Fig.\ref{fig:vehicle1} and Fig.\ref{fig:vehicle2} depict a map building vehicle and trailer developed by some of the authors at Virginia Tech. The system generates experimental observations in the form of data that is scattered in time and space. These data sets have extremely high dimensionality. 
Roughly 180 million scattered data points are collected per minute of data acquisition, which corresponds to a data file of roughly $\mathcal{O}(1GB)$ in size. Current algorithms and software developed in-house post-process the scattered data to generate road and terrain maps. This offline batch computing problem can take many days of computing time to complete. It remains a challenging task to derive a theory and associated algorithms that would enable adaptive or online estimation of terrain maps from such high dimensional, scattered measurements.

This paper introduces a novel theory and associated algorithms that are amenable to observations that take the form of scattered data. The key attribute of the approach is that the unknown function representing the terrain is viewed as an element of a RKHS. The RKHS is constructed in terms of a kernel function $k(\cdot,\cdot): \Omega \times \Omega \rightarrow \mathbb{R}$ where $\Omega \subseteq \mathbb{R}^d$ is the domain over  which scattered measurements are made. 
The kernel $k$ can often be used to define a collection of radial basis functions (RBFs) $k_x(\cdot):=k(x,\cdot)$, each of which is said to be centered at some point $x\in \Omega$. For example, these RBFs might be exponentials, wavelets, or thin plate splines \cite{wendland}. 
By embedding the unknown function that represents the terrain in a RKHS, the new formulation generates a system that constitutes a distributed parameter system. The unknown function, representing map terrain, is the infinite dimensional distributed parameter. 
Although the study of infinite dimensional distributed parameter systems can be substantially more difficult than the study of ODEs, a key result is that stability and convergence of the approach can be established succinctly in many cases.  
Much of the complexity \cite{bsdr1997,bdrr1998} associated with construction of Gelfand triples or the analysis of infinitesimal generators and semigroups that define a DPS can be avoided for many examples of the systems in this paper.
The kernel $k(\cdot,\cdot): \Omega \times \Omega \rightarrow \mathbb{R}$ that defines the RKHS provides a natural collection of bases for approximate estimates of the solution that are based directly on some subset of scattered measurements $\{ x_i \}_{i=1}^n \subset \mathbb{R}^d$. 
It is typical in applications to select the centers $\{x_i\}_{i=1}^n$ that locate the basis functions from some sub-sample of the locations at which the scattered data is measured. Thus, while we do not study the nuances of such methods, in this paper the formulation provides a natural framework to pose so-called ``basis adaptive methods'' such as in~\cite{dzcf2012} and the references therein. 

While our formulation is motivated by this particular application, it is a  general construction for framing and generalizing some conventional approaches for online adaptive estimation. This framework introduces sufficient conditions that guarantee convergence of estimates in spatial domain $\Omega$ to the unknown function $f$. In contrast, nearly all conventional strategies consider stability and convergence in time alone for some fixed finite dimensional space of $\mathbb{R}^d \times \mathbb{R}^n$, with $n$ the number of parameters used to represent the estimate.  The remainder of this paper studies the existence and uniqueness of solutions, stability, and convergence of approximate solutions for the infinite dimensional adaptive estimation problem defined over an RKHS. The paper concludes with an example of an RKHS adaptive estimation problem for a simple model of map building from vehicles.  The numerical example demonstrates the rate of convergence for finite dimensional models constructed from RBF bases that are centered at a subset of scattered observations. 

\begin{figure}
\centering
\includegraphics[scale=0.75]{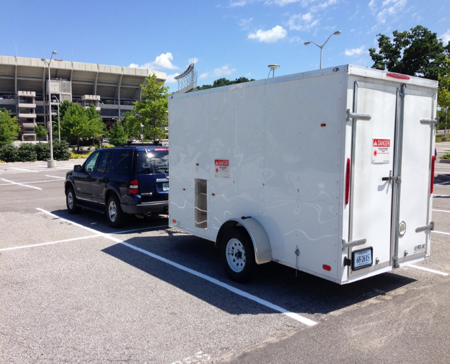}
\caption{Vehicle Terrain Measurement System, Virginia Tech}
\label{fig:vehicle1}
\end{figure}
\begin{figure}
\centering
\includegraphics[scale=0.75]{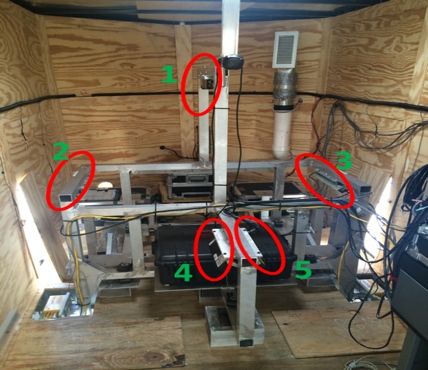}
\caption{Experimental Setup with LMI 3D GO-Locator Lasers}
\label{fig:vehicle2}
\end{figure}  

\subsection{Related Research}
\label{sec:related_research}
The general theory derived in this paper has been motivated in part by the terrain mapping application discussed in Section \ref{sec:introduction}, but also by recent research in a number of fields related to estimation of nonlinear functions.  In this section we briefly review some of the recent research in probabilistic or Bayesian mapping methods, nonlinear approximation and learning theory, statistics, and nonlinear regression.

\subsubsection{Bayesian  and Probabilistic Mapping}
Many popular known techniques adopt a probabilistic approach towards solving the localization and mapping problem in robotics. The algorithms used to solve this problem fundamentally rely on Bayesian estimation techniques like particle filters, Kalman filters and other variants of these methods \cite{Thrun2005Probabilistic, Whyte2006SLAM1, Whyte2006SLAM2}. The computational efforts required to implement these algorithms can be substantial since they involve constructing and updating maps while simultaneously tracking the relative locations of agents with respect to the environment. Over the last three decades significant progress has been made on various frontiers in terms of  high-end sensing capabilities, faster data processing hardwares, robust and efficient computational algorithms \cite{Dissanayake2011Review, Dissanayake2000Computational}. However, the usual Kalman filter based approaches implemented in these applications often are required to address the inconsistency problem in estimation that arise from uncertainties in state estimates \cite{Huang2007Convergence,Julier2001Counter}. Furthermore, it is well acknowledged among the community that these methods suffer from  a major drawback of `{\em closing the loop}'. This refers to the ability to adaptively update the information if it is  revisited. Since such a capability for updating information demands huge memory to store the high resolution and high bandwidth data. Moreover, it is highly nontrivial to guarantee that the  uncertainties in estimates would converge to lower bound at sub optimal rates, since matching these rates and bounds significantly constraint the evolution of states along infeasible trajectories. While probabilistic methods, and in particular Bayesian estimation techniques, for the construction of terrain maps have flourished over the past few decades, relatively few approaches for establishing deterministic theoretical  error bounds in the spatial domain of the unknown function representing the terrain have appeared.

\subsubsection{Approximation and Learning Theory}
Approximation theory has a long history, but the subtopics of most relevance to this paper  include  recent studies in multiresolution analysis (MRA),  radial basis function (RBF) approximation and learning theory. The study of MRA techniques became popular in the late 1980's and early 1990's, and it has flourished since that time. We use only a small part of the general theory of MRAs in this paper,  and we urge the interested reader to consult one of the excellent treatises on this topic for a full account. References \cite{Meyer,mallat,daubechies, dl1993} are good examples of such detailed treatments. We briefly summarize the pertinent aspects of MRA here and in Section \ref{sec:MRA}. A multiresolution analysis defines a family of nested approximation spaces $\seq{H_j}_{j\in \mathbb{N}}\subseteq H$ of an abstract space $H$ in terms of a single function $\phi$, the scaling function.  The approximation space $H_j$  is defined in terms of bases that are constructed from dilates and translates $\seq{\phi_{j,k}}_{k\in \mathbb{Z}^d}$ with $\phi_{j,k}(x):=2^{jd/2}\phi(2^jx-k)$ for $x\in \mathbb{R}^d$ of this single function $\phi$. It is for this reason that  these spaces are sometimes referred to as shift invariant spaces. While the MRA is ordinarily defined only in terms of the scaling functions, the theory provides a rich set of tools to derive bases $\seq{\psi_{j,k}}_{k\in \mathbb{Z}}$, or wavelets, 
for the complement spaces $W_j:=V_{j+1}- V_{j}$.  Our interest in multiresolution analysis arises since these methods can be used to develop multiscale kernels for RKHS, as summarized in  \cite{opfer1,opfer2}. We only consider approximation spaces defined in terms of the scaling functions in this paper. Specifically, with a parameter $s \in \mathbb{R}^+$ measuring smoothness, we use $s-$regular MRAs to define admissible kernels for the reproducing kernels that embody the online and adaptive estimation strategies in this paper.
When the MRA bases are smooth enough, the RKHS kernels derived from a MRA can be shown to be equivalent to a scale of Sobolev spaces having well documented approximation properties. 
The B-spline bases in the numerical examples yield RKHS embeddings with good condition numbers. The details of the RKHS embedding strategy given in terms of wavelet bases associated with an MRA is treated in the forthcoming paper.
\subsubsection{Learning Theory  and Nonlinear Regression}
The  methodology defined in this paper for online adaptive estimation can be viewed as similar in philosophy to the recent efforts that synthesize learning theory and approximation theory. \cite{dkpt2006,kt2007,cdkp2001,t2008} In these references, independent and identically distributed observations of some unknown  function are collected, and they are used to define an estimator of that unknown function.  Sharp estimates of error, guaranteed to hold in probability spaces, are possible using tools familiar from learning theory and thresholding in approximation spaces. The approximation spaces are usually defined   terms of subspaces of an MRA.  However, there are a few key differences between the these efforts in nonlinear regression and learning theory and this paper. The learning theory approaches to estimation of the unknown function depend on observations of the function itself.  In contrast, the adaptive online estimation framework here assumes that observations are made of the estimator states, not directly of the unknown function itself. The learning theory methods also assume a discrete measurement process, instead of the continuous measurement process that characterizes online adaptive estimation. On the other hand, the methods based on learning theory derive sharp function space rates of convergence of the estimates of the unknown function.  Such estimates are not available in conventional online adaptive estimation methods. Typically, convergence in adaptive estimation strategies is guaranteed in time in a fixed finite dimensional space.  One of the significant contributions of this paper is to construct sharp convergence rates in function spaces, similar to approaches in learning theory,  of the unknown function using online adaptive estimation.  

\subsubsection{Online Adaptive Estimation and Control}

Since the approach in this paper generalizes a standard strategy in online adaptive estimation and control theory, we review this class of methods in some detail. This summary will be crucial in understanding the nuances of the proposed technique and in contrasting the sharp estimates of error available in the new strategy to those in the conventional approach. 
    Many popular textbooks study online or adaptive estimation within the context of adaptive control theory for systems governed by ordinary differential equations \cite{sb2012,IaSu,PoFar}. The theory has been extended in several directions, each with its subtle assumptions and associated analyses. 
    Adaptive estimation and control theory has been refined for decades, and significant  progress has been made in deriving convergent estimation and stable control strategies that are robust with respect to some classes of uncertainty.
The efforts in \cite{bsdr1997,bdrr1998} are relevant to this paper, where the authors generalize some of adaptive estimation and model reference adaptive control (MRAC) strategies  for ODEs so that they apply to deterministic infinite dimensional evolution systems. In addition, \cite{dmp1994,dp1988,dpg1991,p1992} also investigate adaptive control and estimation problems under various assumptions for classes of stochastic and infinite dimensional systems. 
Recent developments in $\mathcal{L}^1$ control theory as presented in \cite{HC}, for example, utilize adaptive estimation and control strategies in obtaining stability and convergence for systems generated by collections of nonlinear ODEs. 

To motivate this paper, we consider a model problem in which the plant dynamics are generated by the nonlinear ordinary differential equations
\begin{align}
\dot{x}(t)&= A x(t) + Bf(x(t)), \quad \quad x(0)=x_0 
\label{eq:simple_plant}
\end{align}
with  state $x(t)\in \mathbb{R}^d$, the known  Hurwitz system matrix $ A \in \mathbb{R}^{d\times d}$, the known control influence matrix $B\in \mathbb{R}^d$, and the unknown function $f:\mathbb{R}^d \rightarrow \mathbb{R}$. 
Although this model problem is an exceedingly simple prototypical example studied in adaptive estimation and control of ODEs \cite{sb2012,IaSu,PoFar}, it has proven to be an effective case study in motivating alternative formulations such as in \cite{HC} and will suffice to motivate the current approach. 
Of course, much more general plants are treated in standard methods \cite{sb2012,IaSu,PoFar,naranna} and can be attacked using the strategy that follows.  This structurally simple  problem is chosen so as to clearly illustrate the essential constructions of RKHS embedding method while omitting the nuances associated with general plants. A typical adaptive estimation problem can often be formulated in terms of an estimator equation and a learning law. One of the simplest estimators for this model problem takes the form  
\begin{align}
\dot{\hat{x}}(t)&= A \hat{x}(t) + B\hat{f}(t,x(t)), 
\quad \quad 
\hat{x}(0)=x_0 
\label{eq:sim_estimator}
\end{align}
where $\hat{x}(t)$ is an estimate of the state $x(t)$ and $\hat{f}(t,x(t))$ is time varying estimate of the unknown function $f$ that depends on measurement of the state $x(t)$ of the plant at time $t$. When the state error $\tilde{x}:=x-\hat{x}$ and function estimate error $\tilde{f}:=f-\hat{f}$ are defined, the state error equation is simply 
\begin{align}
\dot{\tilde{x}}(t)&= A \tilde{x}(t) + B\tilde{f}(t,x(t)), \quad \quad 
\tilde{x}(0)=\tilde{x}_0.
\label{eq:sim_error}
\end{align}
The goal of adaptive or online estimation is to determine a learning law that governs the evolution of the function estimate $\hat{f}$ and guarantees that the state estimate $\hat{x}$ converges to the true state $x$,  
$
\tilde{x}(t)= x(t)-\hat{x}(t) \to
0  \text{ as } t\to \infty
$.
Perhaps additionally, it is hoped that the function estimates $\hat{f}$  converge to the unknown function $f$,
$
\tilde{f}(t)= f(t) -\hat{f}(t) \to
0 \text{ as } t \to \infty.
$
The choice of the learning law for the update of the adaptive estimate $\hat{f}$ depends intrinsically on what specific information is available about the unknown function $f$.
It is most often the case for ODEs that the estimate $\hat{f}$ depends on a finite set of unknown parameters $\hat{\alpha}_1,\ldots,\hat{\alpha}_n$.  The learning law is then expressed as an evolution law for the parameters $\hat{\alpha}_i$, $i=1,\ldots,n$. The discussion that follows  emphasizes that this is a very specific underlying assumption regarding the information available about unknown function $f$. Much more general prior assumptions are possible.  

\subsubsection{Classes of Uncertainty in Adaptive Estimation}
The adaptive estimation task seeks to construct a learning law based on the knowledge that is available regarding the function $f$. 
Different methods for solving this problem have been developed depending on the type of information available about the unknown function $f$. 
The uncertainty about $f$ is often described as forming  a continuum between structured and unstructured uncertainty. 
In the most general case, we might know that $f$ lies in some compact set $\mathcal{C}$ of a  particular Hilbert space of functions $H$ over a subset $\Omega \subseteq \mathbb{R}^d$.
This case, that reflects in some sense the least information regarding the unknown function, can be expressed as the condition that 
$
f \in \{g \in \mathcal{C} | \mathcal{C}\subset {H} \},
$
 for some compact set of functions $\mathcal{C}$ in a Hilbert space of functions $H$. 
In approximation theory, learning theory, or non-parametric estimation problems this information is sometimes referred to as the {\em prior}, and choices of $H$ commonly known as the hypothesis space. The selection of the hypothesis space $H$ and set $\mathcal{C}$ often reflect the approximation,  smoothness, or compactness properties of the unknown function \cite{dkpt2006}.
This example may in some sense utilize only limited or minimal information regarding the unknown function $f$, and we may refer to the uncertainty as unstructured. Numerous variants of conventional adaptive estimation admit additional knowledge about the unknown function. 
In most conventional cases the unknown function $f$ is assumed to be given in terms of some fixed set of parameters. 
This situation is similar in philosophy to problems of parametric estimation which restrict approximants to classes of functions that admit representation in terms of a specific set of parameters. 
Suppose the finite dimensional basis $\left \{ \phi_k\right \}_{k=1,\ldots, n}$ is known  for a particular finite dimensional subspace $H_n \subseteq H$ in which the function lies, and further that the uncertainty is expressed as the condition that there is a unique set of unknown coefficients $\left \{\alpha_i^*\right\}_{i=1,\ldots,n} $ such that $f:=f^*=\sum_{i=1,\ldots,n} \alpha_i^* \phi_i \in H_n$.  Consequently, conventional approaches may restrict the adaptive estimation technique to construct an estimate with knowledge that $f$ lies in the set
\begin{align}
\label{eq:e2}
f \in \biggl \{ g \in H_n \subseteq H \biggl | 
&g = \sum_{i=1,\ldots,n} \alpha_i \phi_i
\text{ with } \\
\notag &\alpha_i \in [a_i,b_i] 
\subset \mathbb{R} \text{ for } i=1,\ldots,n 
\biggr \}
\end{align}
\noindent This is an example where the uncertainty in the estimation problem may be said to be structured.  The unknown function is parameterized by the collection of coefficients $\{\alpha_i^*\}_{i=1,\ldots,n}$.
In this case the compact set the $\mathcal{C}$ is  a subset of $H_n$. As we discuss in sections ~\ref{subsec:Lit},~\ref{sec:RKHS},and ~\ref{sec:existence}, the RKHS embedding approach can be characterised by the fact that the uncertainty is more general and even unstructured, in contrast to conventional methods.
%

\subsubsection{Adaptive Estimation in $\mathbb{R}^d \times \mathbb{R}^n$}
\label{subsec:adapt1}
The development of adaptive estimation strategies when the uncertainty takes the form in \ref{eq:e2} represents, in some sense, an iconic approach in the adaptive estimation and control community.  
Entire volumes \cite{sb2012,IaSu,PoFar,NarPar199D} contain numerous variants of strategies that can be applied to solve adaptive estimation problems in which the uncertainty takes the form in \ref{eq:e2}. 
One canonical approach to such an  adaptive estimation problem is  governed by three coupled equations:  the plant dynamics ~\ref{eq:f}, estimator equation \ref{eq:a2}, and the learning rule.
We organize the basis functions as $\phi:=[\phi_1,\dots,\phi_n]^T$ and the parameters as $\alpha^{*^T}=[\alpha^*_1,\ldots,\alpha^*_n]$, 
$\hat{\alpha}^T=[\hat{\alpha}_1,\ldots,\hat{\alpha}_n]$. A common gradient based learning law yields the governing equations that incorporate the plant dynamics,   estimator equation, and the learning rule.
\begin{align}
\label{eq:f}
\dot{x}(t) &= Ax(t) + B \alpha^{*^T} \phi(x(t)),\\
\label{eq:a2}
\dot{\hat{x}}(t) &
=A \hat{x}(t) + B  \hat{\alpha}^T(t) \phi(x(t)), \\
\label{eq:a3}
\dot{\hat{\alpha}}(t) &= \Gamma^{-1}\phi B^T P(x-\hat{x}),
\end{align}
where   $\Gamma\in \mathbb{R}^{n\times n}$ is  symmetric and positive definite. The symmetric positive definite matrix $P\in\mathbb{R}^{d\times d}$ is the unique solution of Lyapunov's equation $A^T P + PA = -Q$, for some selected symmetric positive definite $Q \in \mathbb{R}^{d\times d}$.
\noindent Usually the above equations are summarized in terms the two error equations 
\begin{align}
\label{eq:a4}
\dot{\tilde{x}}(t) &= A \tilde{x} + B \phi^{T}(x(t))\tilde{\alpha}(t)\\
\label{eq:a5}
\dot{\tilde{\alpha}}(t) &= -\Gamma^{-1} \phi(x(t)) B^T P\tilde{x}.
\end{align}
with $\tilde{\alpha}:=\alpha^*-\hat{\alpha}$ and $\tilde{x}:=x-\hat{x}$. 
 Equations ~\ref{eq:a4},~\ref{eq:a5} can also be written as 
\begin{align}
\begin{Bmatrix}
\dot{\tilde{x}}(t) \\
\dot{\tilde{\alpha}}(t)
\end{Bmatrix}
=
\begin{bmatrix}
A & B \phi^T (x(t))\\
-\Gamma^{-1} \phi(x(t)) B ^T P & 0
\end{bmatrix}
\begin{Bmatrix}
\tilde{x}(t)\\
\tilde{\alpha}(t)
\end{Bmatrix}.
\label{eq:error_conv}
\end{align}
This equation defines an evolution on $\mathbb{R}^d \times \mathbb{R}^n$
and has been studied in great detail in ~\cite{naranna,narkud,mornar}. 
Standard texts such as ~\cite{sb2012,IaSu,PoFar,NarPar199D} outline numerous other variants for the online adaptive estimation problem using projection, least squares methods and other popular approaches.


      \subsection{Overview of Our Results}
      \label{subsec:Lit}
 \subsubsection{Adaptive Estimation in $\mathbb{R}^d \times H$}
 \label{subsec:adapt2}
 In this paper, we study the method of RKHS embedding that interprets the unknown function $f$ as an element of the RKHS $H$, without any {\em a priori} selection of the particular finite dimensional subspace used for estimation of the unknown function.  The counterparts to Equations ~\ref{eq:f},~\ref{eq:a2},~\ref{eq:a3} are  the plant, estimator, and learning laws 
\begin{align}
\dot{x}(t) &= Ax(t) + BE_{x(t)}f,\\
\dot{\hat{x}}(t) &= A\hat{x}(t) + BE_{x(t)}\hat{f}(t), \label{eq:rkhs_plant}\\
\dot{\hat{f}}(t) &= \Gamma^{-1}(BE_{x(t)})^*P(x(t) - \hat{x}(t)),
\end{align}
where as before $x,\hat{x}\in \mathbb{R}^d$, but $f$ and $\hat{f}(t)\in H$,  $E_{\xi}: H \to \mathbb{R}^d $ is the evaluation functional  that is given by  $E_{\xi}: f \mapsto f(\xi)$ for  all $\xi\in \mathbb{R}^d$ and $f \in H$, and $\Gamma\in \mathcal{L}(H,H)$ is a self adjoint, positive definite linear  operator.a  The error equation  analogous to Equation~\ref{eq:error_conv} system is then given by
\begin{align}
\begin{Bmatrix}
\dot{\tilde{x}}(t) \\
\dot{\tilde{f}}(t)
\end{Bmatrix}
=
\begin{bmatrix}
A & B E_{x(t)}\\
-\Gamma^{-1}(B E_{x(t)})^*P & 0
\end{bmatrix}
\begin{Bmatrix}
\tilde{x}(t)\\
\tilde{f}(t)
\end{Bmatrix},
\label{eq:eom_rkhs}
\end{align}
which defines an evolution on $\mathbb{R}^d \times H$, instead of on $\mathbb{R}^d \times \mathbb{R}^n$. 

\subsubsection{Existence, Stability, and Convergence Rates}
We briefly summarize and compare the conlusions that can be reached for the conventional and RKHS embedding approaches. Let $(\hat{x}, \hat{f})$ be estimates of $(x,f)$ that evolve according to  the state, estimator, and learning law of RKHS embedding. Define the state and distributed parameter error as  $\tilde{x}:=x-\hat{x}$ and $\tilde{f}:=f-\hat{f}$, respectively.  Under the assumptions outlined in Theorems \ref{th:unique}, \ref{th:stability}, and \ref{th:PE} for each $T>0$ there is a unique mild solution for the error $(\tilde{x},\tilde{f})\in C([0,T];\mathbb{R}^d\times H)$  to the DPS described by Equations \ref{eq:eom_rkhs}. Moreover, the error in state estimates $\tilde{x}(t)$ converges to zero,
$\lim_{t \rightarrow \infty} \| \tilde{x}(t)\|=0$. If all the evolutions with initial conditions in an open ball containing the origin  exist in $C([0,\infty);\mathbb{R}\times H)$,  the equilibrium at the origin $(\tilde{x},\tilde{f})=(0,0)$ is stable. The results so far are therefore entirely analogous to conventional estimation method, but are cast in the infinite dimensional RKHS $H$. See the standard texts ~\cite{sb2012,IaSu,PoFar,NarPar199D} for proofs of existence and convergence of the conventional methods. It must be emphasized again that the conventional results are stated for evolutions in $\mathbb{R}^d\times\mathbb{R}^n$, and the RKHS results hold for evolutions in $\mathbb{R}^d\times H$. Considerably more can be said about the convergence of finite dimensional approximations. For the RKHS embedding approach state  and finite dimensional approximations $(\hat{x}_j,\hat{f}_j)$ of the infinite dimensional estimates $(\hat{x},\hat{f})$ on a grid  that has  resolution level $j$ are governed by Equations \ref{eq:approx_on_est1} and \ref{eq:approx_on_est2}. The finite dimensional estimates $(\hat{x}_j,\hat{f}_j)$ converge to the infinite dimensional estimates $(\hat{x},\hat{f})$ at a rate that depends on $\|I-\Gamma\Pi_j^*\Gamma_j^{-1} \Pi_j\|$ and $\|I - \Pi_j\|$ where $\Pi_j : H \to H_j$ is the $H$-orthogonal projection.

The remainder of this paper studies the existence and uniqueness of solutions, stability, and convergence of approximate solutions for infinite dimensional, online or adaptive estimation problems. The analysis is based on a study of  distributed parameter systems (DPS) that contains the RKHS $H$.  The paper concludes with an example of an RKHS adaptive estimation problem for a simple model of map building from vehicles.  The numerical example demonstrates the rate of convergence for finite dimensional models constructed from radial basis function (RBF) bases that are centered at a subset of scattered observations. 
The discussion focuses on a comparison and contrast of the analysis for the ODE system and the distributed parameter system.
Prior to these discussions, however, we present a brief review fundamental properties of RKHS spaces in the next section.
 \section{Reproducing Kernel Hilbert Space}
 \label{sec:RKHS}
Estimation techniques for distributed parameter systems have been previously studied in \cite{bk1989}, and further developed to incorporate adaptive estimation of parameters in certain infinite dimensional systems by \cite{bsdr1997} and the references therein. These works also presented the necessary conditions required to achieve parameter convergence during online estimation. But both approaches rely on delicate semigroup analysis and evolution, or Gelfand triples.The approach herein is much simpler and amenable to a wide class of applications. It appears to be simpler, practical approach to generalise conventional methods. This paper considers estimation problems that are cast in terms of the unknown function $f:\Omega \subseteq \mathbb{R}^d \to \mathbb{R}$, and our approximations will assume that this function is an element of a reproducing kernel Hilbert space. One way to define a reproducing kernel Hilbert space relies on demonstrating the boundedness of evaluation functionals, but we briefly summarize a constructive approach that is helpful in applications and understanding computations such as in our numerical examples. 

In this paper $\mathbb{R}$ denotes the real numbers, $\mathbb{N}$ the positive integers, $\mathbb{N}_0$ the non-negative integers, and $\mathbb{Z}$ the integers. We follow the convention that $a \gtrsim b$ means that there is a constant $c$, independent of $a$ or $b$, such that $b \leq ca$. When $a\gtrsim b $ and $b\gtrsim a$, we write $a \approx b $. Several function spaces are used in this paper. The $p$-integrable Lebesgue spaces are denoted $L^p(\Omega)$ for $1\leq p \leq \infty$, and $C^s (\Omega)$ is the space of continuous functions on $\Omega$ all of whose derivatives less than or equal to $s$ are continuous. The space $C_b^s (\Omega)$ is the normed vector subspace of $C^s (\Omega)$ and consists of all $f\in C^s (\Omega)$ whose derivatives of order less than or equal to $s$ are bounded. The space $C^{s,\lambda} (\Omega)\subseteq C_b^s (\Omega) \subseteq C^s (\Omega)$ is the collection of functions  with derivatives $\frac{\partial^{|\alpha|}f}{\partial x^{|\alpha|}}$ that are $\lambda$-Holder continuous, 
\begin{align*}
\|f(x)-f(y)\| \leq C\|x - y\|^{\lambda}
\end{align*}
The Sobolev space of functions that have weak derivatives of the order less than equal to $r$  that lie in $L^p(\Omega)$ is denoted $H^r_p(\Omega)$.

A reproducing kernel Hilbert space is constructed in terms of a symmetric,  continuous, and positive definite function $k:\Omega \times \Omega \to \mathbb{R}$, where positive definiteness requires that for any finite collection of points 
$\{x_i\}_{i=1}^n \subseteq \Omega $ 
$$\sum_{i,j=1}^{n}k(x_i , x_j ) \alpha_i \alpha_j \gtrsim \|\alpha\|^{2}_{\mathbb{R}^n}
$$
for all $\alpha = \{\alpha_1,\hdots, \alpha_n \}^T$.. For each $x\in \Omega$, we denote the function $k_x := k_x (\cdot) = k(x,\cdot)$ and refer to $k_x$ as the kernel function centered at $x$. In many typical examples ~\cite{wendland}, $k_x$ can be interpreted literally as a radial basis function centered at $x\in \Omega$. For any kernel functions $k_x$ and $k_y$ centered at $x,y \in \Omega$,  we define the  inner product $(k_x,k_y):= k(x,y)$. 
The RKHS $H$ is then defined as the completion of all finite sums extracted from the set $\{k_x|x \in \Omega\}$.
It is well known that this construction guarantees the boundedness of the evaluation functionals $E_x : H \to \mathbb{R}$. In other words for each $x\in \Omega$ we have a constant $c_x$ such that 
$$ |E_x f | = |f(x)| \leq c_x \|f\|_H$$
for all $f\in H$. The reproducing property of the RKHS $H$ plays a crucial role in the analysis here, and it states that,
$$E_xf = f(x) = (k_x , f)_H$$
for $x \in \Omega$ and $f\in H$. We will also require the adjoint $E_x^* :\mathbb{R}\to H $ in this paper, which can be calculated directly by noting that 
$$ (E_x f,\alpha )_\mathbb{R} = (f,\alpha k_x)_H = (f,E_x^* \alpha)_H $$
for $\alpha \in \mathbb{R}$ , $x\in \Omega$ and $f\in H$. Hence, $E_x^* : \alpha \mapsto \alpha k_x \in H$.

Finally, we will be interested in the specific case in which it is possible to show that the RKHS $H$ is a subset of $C(\Omega)$, and furthermore, that the associated injection$i:H \rightarrow C(\Omega)$ is uniformly bounded.
This uniform embedding is possible, for example, provided that the kernel is bounded by a constant $\tilde{C}^2$, 
$
\sup_{x\in \Omega} k(x,x) \leq  \tilde{C}^2.
$
This fact follows by first noting that by the reproducing kernel property of the RKHS, 
we  can write 
\begin{equation}
|f(x)|=|E_x f |= |(k_x, f)_H | \leq \|k_x \|_H \|f\|_H.
\end{equation}
From the definition of the inner product on $H$, we have 
$
\|k_x \|^2=|(k_x, k_x)_H |=|(k(x,x)| \leq \tilde{C}^2.
$
It follows that $\|if\|_{C(\Omega)}:= \|f\|_{C(\Omega)} \leq {\tilde{C}} \|f\|_H$ and thereby that $\|i\|\leq {\tilde{C}}$. We next give two examples that will be studied in this paper.

\subsection*{Example: The Exponential Kernel}
A popular example of an RKHS, one that will be used in the numerical examples, is constructed from the family of exponentials $\kappa(x,y):=e^{-\| x-y\|^2/\sigma^2}$ where $\sigma>0$. 
Suppose that  $\tilde{C} = \sqrt{\sup_{x\in\Omega}\kappa(x,x)}<\infty$. Smale and Zhou in \cite{sz2007} argue that 
$$
|f(x)|=|E_x(f)|=|(\kappa_x,f)_H|\leq 
\|\kappa_x\|_H \|f\|_H
$$
for all $x\in \Omega$ and $f\in H$, and since 
$\|\kappa_x\|^2=|\kappa(x,x)|\leq \tilde{C}^2$,  it follows  that the embedding $i:H \rightarrow L^\infty(\Omega)$ is bounded,
$$
\|f\|_{L^\infty(\Omega)}:=\|i(f)\|_{L^\infty(\Omega)}\leq \tilde{C} \|f\|_H.
$$
For the exponential kernel above, $\tilde{C}=1$. 
Let $C^s(\Omega)$ denote the space of functions on $\Omega$ all of whose partial derivatives of order less than or equal to $s$ are continuous.  The space $C^s_b(\Omega)$ is endowed with the norm
$$
\|f\|_{C^s_b(\Omega)}:= \max_{|\alpha|\leq s}
\left \|  
\frac{\partial^{|\alpha|}f}{\partial x^\alpha}
\right \|_{L^\infty(\Omega)},
$$
with the summation  taken over  multi-indices $\alpha:=\left \{ \alpha_1, \ldots,\alpha_d \right \}\in \mathbb{N}^d$, $\partial x^{\alpha}:=\partial x_1^{\alpha_1} \cdots \partial x_d^{\alpha_d}$, and  $|\alpha|=\sum_{i=1,\ldots,d} \alpha_i$. 
Observe that the continuous functions in $C^s(\Omega)$ need not be bounded even if $\Omega$ is a bounded open domain. The space $C^s_b(\Omega)$ is the subspace consisting of functions  $f\in C^s_b(\Omega)$ for which all derivatives of order less than or equal to $s$ are bounded.  
The space $C^{s,\lambda}(\Omega)$ is the subspace of functions $f$ in  $C^{s}(\Omega)$ 
for which all of the partial derivatives $\frac{\partial f^{|\alpha|}}{\partial x^\alpha}$ with $|\alpha|\le s$ are
$\lambda$-Holder continuous. The norm of $C^{s,\lambda}(\Omega)$ for $0 < \lambda \leq 1$  is given by
$$
\|f\|_{C^{s,\lambda}(\Omega)} = \|f\|_{C^s(\Omega)}+ \max_{0 \leq \alpha \leq s} \sup_{\substack{x,y\in \Omega \\x\ne y}}\frac{\left| \frac{\partial^{|\alpha|} f}{\partial x^{|\alpha|}}(x) -\frac{\partial^{|\alpha|}f}{\partial x^{|\alpha|}}(y) \right|}{|x-y|^\lambda}
$$
Also, reference \cite{sz2007} notes that if $\kappa(\cdot,\cdot)\in C^{2s,\lambda}_b(\Omega \times \Omega)$ with $0<\lambda<2$ and $\Omega$ is a closed domain, then the inclusion $H\rightarrow C^{s,\lambda/2}_b(\Omega)$ is well defined and continuous.  That is the mapping $i:H \rightarrow C^{s,\lambda/2}_b$ defined via $f\mapsto i(f):=f$ satisfies
$$
\| f\|_{C^{s,\lambda/2}_b(\Omega)}\lesssim \|f\|_H.
$$
In fact reference \cite{sz2007} shows that 
$$
\|f \|_{C^s_b(\Omega)} \leq 4^s \|\kappa\|_{{C^{2s}_b}(\Omega\times \Omega)}^{1/2} \|f\|_H.
$$
The overall important conclusion to draw from the summary above is that there are many conditions that guarantee that the imbedding $H\hookrightarrow C_b(\Omega)$ is continuous. This condition will play a central role in devising simple conditions for existence of solutions of the RKHS embedding technique.


\subsection{Multiscale Kernels Induced by $s$-Regular Scaling Functions}
\label{sec:MRA}
The characterization of the norm of the Sobolev space $H^{r}_2:=H^{r}_2(\mathbb{R}^d)$ has appeared in many monographs that discuss multiresolution analysis \cite{Meyer,mallat,devore1998}. It is also possible to define the Sobolev space $H^{r}_2(\mathbb{R}^d)$ as the Hilbert space constructed from a reproducing kernel $\kappa(\cdot,\cdot):\mathbb{R}^d \times \mathbb{R}^d \rightarrow \mathbb{R}$ that is defined in terms of an $s$-regular scaling function $\phi$ of an multi-resolution analysis (MRA) \cite{Meyer,devore1998}. The scaling function $\phi$ is $s$-regular provided that, for $\frac{d}{2}<r<s$, we define the kernel
\begin{align*}
\kappa(u,v):&=\sum_{j=0}^\infty 2^{j(d-2r)}\sum_{k\in \mathbb{Z}^d}\phi(2^ju-k)\phi(2^jv-k)\\
&=\sum_{j=0}^\infty 2^{-2rj}\sum_{k\in \mathbb{Z}^d}\phi_{j,k}(u)\phi_{j,k}(v)
.\end{align*}
It should be  noted that  the requirement  $d/2<r$ implies  the coefficient $2^{j(d-2r)}$ above is  decreasing as $j\rightarrow \infty$, and ensures  the summation converges. As discussed in Section \ref{sec:RKHS}  and in reference \cite{opfer1,opfer2}, the RKHS is constructed as the closure of the finite linear span of the set of function $\left\{\kappa_u\right\}_{u\in \Omega}$ with $\kappa_u(\cdot):=\kappa(u,\cdot)$. Under the assumption that $\frac{d}{2}<r<s$, the Sobolev space $H^r_2(\mathbb{R}^d)$  
can also be related to the Hilbert space $H_\kappa^r(\mathbb{R}^d)$ 
defined as 
\begin{align*}
H_{\kappa}^r(\mathbb{R}^d):=\left\{ f:\mathbb{R}^d\rightarrow\mathbb{R} \mid (f,f)_{\kappa,r}^\frac{1}{2}=\|f\|_{\kappa,r}<\infty\right\}
\end{align*}
with the inner product $(\cdot,\cdot)_{\kappa,r}$ on $H_{\kappa}^r(\mathbb{R}^d)$ defined as  
\begin{align*}
(f,f)_{\kappa,r}&:=\|f\|_{\kappa,r}^2:= 
\inf \biggl\{ \sum_{j=0}^\infty  2^{j(2r-d)}\|f_j\|_{V_j}^2\biggl|
 f_j\in V_j, f=\sum_{j=0}^\infty f_j\biggr\}
\end{align*}
with $\|f\|^2_{V_j}=\sum_{k \in \mathbb{Z}^d} c_{j,k}^2 $ for  $f_j(u)=\sum_{k \in \mathbb{Z}^d}c_{j,k}\phi(2^ju-k)$ and $j\in \mathbb{N}_0$. Note that the characterization above of $H_{\kappa}^r(\mathbb{R}^d)$ is expressed only in terms of the scaling functions $\phi_{j,k}$ for $j\in \mathbb{N}_0$ and $k\in \mathbb{Z}^d$. The functions $\phi$ and $\psi$ need not define an orthonormal multiresolution in this characterization, and the bases $\psi_{j,k}$ for the complement spaces $W_j$ are not used. We discuss the use of wavelet bases $\psi_{j,k}$ for the definition of the kernel in forthcoming paper. References \cite{opfer1,opfer2} show that when $d/2< r<s$, we have the norm equivalence  
\begin{align}
H_{\kappa}^r(\mathbb{R}^d)\approx H^{r}_2(\mathbb{R}^d).
\label{eq:norm_equiv}
\end{align}
Finally, from Sobolev's Embedding Theorem \cite{af2003}, whenever $r>d/2$ we have the embedding
$$
H^r_2 \hookrightarrow C_b^{r-d/2} \subset C^{r-d/2}
$$
where $C_b^r$ is the subspace of functions $f$ in  $C^r$ all of whose derivatives up through order $r$ are bounded. In fact, by choosing the $s$-regular MRA with $s$ and $r$ large enough, we have the imbedding 
$H^r_2(\Omega) \hookrightarrow C(\Omega)$ when $\Omega \subseteq \mathbb{R}^d$ \cite{af2003}.

One of the simplest examples that meet the conditions of this section includes the normalized B-splines of order $r>0$. We denote by $N^r$ the normalized B-spline of order $r$ with integer knots and define its translated dilates by  $N^r_{j,k}:=2^{jd/2}N^r(2^{jd} x - k)$ for $k\in \mathbb{Z}^d$ and $j\in \mathbb{N}_0$. In this case the kernel is written in the form
$$
\kappa(u,v):=\sum_{j=0}^\infty 2^{-2rj}\sum_{k\in \mathbb{Z}^d}N^r_{j,k}(u)N^r_{j,k}(v).
$$
Figure \ref{fig:nbsplines} depicts the translated dilates of the normalized B-splines of order $1$ and $2$ respectively.
\begin{center}
\begin{figure}[h!]
\centering
\begin{tabular}{cc}
\includegraphics[width=.4\textwidth]{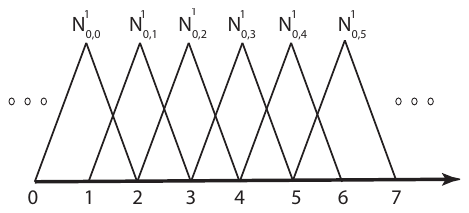}
&
\includegraphics[width=.4\textwidth]{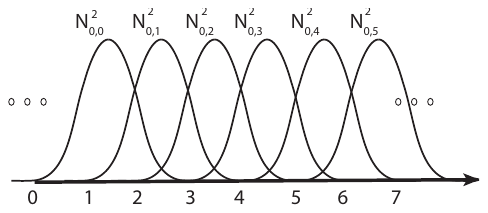}\\
{ B-splines $N^1$}
&
{ B-splines $N^2$}
\end{tabular}
\caption{Translated Dilates of Normalized B-Splines}
\label{fig:nbsplines}
\end{figure}
\end{center}
\section{Existence,Uniqueness and Stability}
\label{sec:existence}
In the adaptive estimation problem  that is cast in terms of a RKHS $H$, we seek a solution $X = (\tilde{x},\tilde{f}) \in \mathbb{R}^d \times H \equiv \mathbb{X}$ that satisfies Equation \ref{eq:eom_rkhs}. 
In general $\mathbb{X}$ is an infinite dimensional state space for this estimation problem, which can in principle substantially complicate the analysis in comparison to conventional ODE methods. 
We first establish that the adaptive estimation problem in Equation \ref{eq:eom_rkhs} is well-posed. 
The result that is derived below is not the most general possible, but rather has been emphasised because its conditions are  simple and easily verifiable in many applications.
\begin{theorem}
\label{th:unique}
Suppose that $x \in C([0,T];\mathbb{R}^d)$ and that the embedding $i:H \hookrightarrow C(\Omega)$ is uniform in the sense that there is a constant $C>0$ such that for any $f \in H$,
\begin{equation}
\label{6}
\|f\|_{C(\Omega)}\equiv \|if\|_{C(\Omega)} \leq C\|f\|_H.
\end{equation}
For any $T>0$ there is a unique mild solution $(\tilde{X},\tilde{f}) \in C([0,T],\mathbb{X})$ to Equation \ref{eq:eom_rkhs} and the map $X_0 \equiv (\tilde{x}_0,\tilde{f}_0) \mapsto (\tilde{x},\tilde{f}) $ is Lipschitz continuous from $\mathbb{X}$ to $C([0,T],\mathbb{X})$.
\end{theorem}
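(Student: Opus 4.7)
The plan is to recognize that, under the uniform embedding hypothesis, Equation \ref{eq:eom_rkhs} is a non-autonomous linear evolution equation on $\mathbb{X}=\mathbb{R}^d\times H$ driven by a bounded, strongly continuous, uniformly norm-bounded operator family $\mathcal{A}(t)$. Existence and uniqueness of a mild solution and Lipschitz dependence on initial data then follow from a standard contraction argument on the Volterra integral form of the equation, together with Gronwall's lemma.

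\textbf{Step 1: boundedness of the coefficient operator.} Write
\begin{equation*}
\mathcal{A}(t)=\begin{bmatrix}A & BE_{x(t)}\\ -\Gamma^{-1}(BE_{x(t)})^*P & 0\end{bmatrix}\in \mathcal{L}(\mathbb{X}).
\end{equation*}
First I would verify that each block is bounded uniformly in $t\in[0,T]$. The hypothesis $\|if\|_{C(\Omega)}\le C\|f\|_H$ gives $|E_\xi f|=|f(\xi)|\le C\|f\|_H$ for every $\xi\in\Omega$ and every $f\in H$, so $\|E_{x(t)}\|_{H\to\mathbb{R}}\le C$ uniformly in $t$. Consequently $\|BE_{x(t)}\|_{H\to\mathbb{R}^d}$ and, by duality of bounded operators, $\|(BE_{x(t)})^*\|_{\mathbb{R}^d\to H}$ are bounded by $C\|B\|$. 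Using that $\Gamma\in\mathcal{L}(H,H)$ is self-adjoint and positive definite (hence boundedly invertible) and that $A,B,P$ are bounded matrices, one obtains a uniform bound $\sup_{t\in[0,T]}\|\mathcal{A}(t)\|_{\mathcal{L}(\mathbb{X})}\le M$ for some $M=M(A,B,P,\Gamma,C)<\infty$.

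\textbf{Step 2: continuity in $t$ and the mild formulation.} Because $x\in C([0,T];\mathbb{R}^d)$ and because the map $\xi\mapsto k_\xi$ is continuous from $\Omega$ into $H$ (since $\|k_\xi-k_\eta\|_H^2=k(\xi,\xi)-2k(\xi,\eta)+k(\eta,\eta)$ is continuous in $(\xi,\eta)$ by continuity of the kernel), the family $t\mapsto \mathcal{A}(t)$ is strongly continuous into $\mathcal{L}(\mathbb{X})$. A mild solution $\tilde{X}=(\tilde x,\tilde f)\in C([0,T];\mathbb{X})$ is then a fixed point of the integral operator
\begin{equation*}
(\Phi \tilde{X})(t):=\tilde{X}_0+\int_0^t \mathcal{A}(s)\tilde{X}(s)\,ds.
\end{equation*}
Standard estimates give $\|(\Phi Y_1-\Phi Y_2)(t)\|_\mathbb{X}\le M\int_0^t\|Y_1(s)-Y_2(s)\|_\mathbb{X}\,ds$, so iterating produces the Picard bound $\|\Phi^n Y_1-\Phi^n Y_2\|_{C([0,T];\mathbb{X})}\le \tfrac{(MT)^n}{n!}\|Y_1-Y_2\|_{C([0,T];\mathbb{X})}$, and some iterate of $\Phi$ is a strict contraction. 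Banach's fixed point theorem then delivers the unique mild solution on $[0,T]$ for any $T>0$.

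\textbf{Step 3: Lipschitz dependence on initial data.} For two initial conditions $\tilde{X}_0,\tilde{Y}_0\in\mathbb{X}$ with corresponding mild solutions $\tilde{X},\tilde{Y}$, subtract the integral equations and take norms to obtain
\begin{equation*}
\|\tilde{X}(t)-\tilde{Y}(t)\|_\mathbb{X}\le \|\tilde{X}_0-\tilde{Y}_0\|_\mathbb{X}+M\int_0^t\|\tilde{X}(s)-\tilde{Y}(s)\|_\mathbb{X}\,ds,
\end{equation*}
and apply Gronwall's inequality to get $\|\tilde{X}(t)-\tilde{Y}(t)\|_\mathbb{X}\le e^{MT}\|\tilde{X}_0-\tilde{Y}_0\|_\mathbb{X}$, which gives the required Lipschitz bound on $X_0\mapsto(\tilde x,\tilde f)$ from $\mathbb{X}$ into $C([0,T];\mathbb{X})$.

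\textbf{Expected obstacle.} The algebra is routine; the one place that requires care is Step 1, specifically that the hypothesis \eqref{6} is precisely what is needed to give a uniform (in $t$ and in $\xi$) operator bound on the off-diagonal block $BE_{x(t)}$ — without this uniform embedding, only a pointwise bound $c_{x(t)}$ would be available, which could blow up along the trajectory and break the contraction estimate. The only other subtle point is ensuring $\Gamma^{-1}$ is a bounded operator on $H$, which is implicit in the hypothesis that $\Gamma$ is self-adjoint positive definite on $H$ (interpreted in the strong sense of being coercive so that $\Gamma^{-1}\in\mathcal{L}(H,H)$).
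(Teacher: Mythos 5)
Your proposal is correct, and it reaches the same conclusion by a somewhat more self-contained route than the paper. The paper splits Equation \ref{eq:eom_rkhs} as $\dot{\tilde{X}}=\mathbb{A}\tilde{X}+\mathbb{F}(t,\tilde{X})$ with a bounded block-diagonal operator $\mathbb{A}$ (containing an arbitrary $A_0\in\mathcal{L}(H,H)$) generating a uniformly continuous semigroup, verifies that $\mathbb{F}$ is uniformly globally Lipschitz in $X$ and continuous in $t$ using exactly the same consequence of the embedding hypothesis that you isolate in Step 1 (namely $\sup_t\|E_{x(t)}\|\leq C$), and then cites Theorem 1.2, p.~184 of Pazy for existence, uniqueness, and Lipschitz dependence on the data. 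You instead keep the full time-varying coefficient $\mathcal{A}(t)$ together, exploit the linearity of the error system, and prove the fixed-point and Gronwall estimates directly; since $\mathbb{A}$ is bounded, the semigroup (variation-of-constants) notion of mild solution used in the paper coincides with the solution of the Volterra equation you take as the definition, so the two statements proved are the same. What your version buys is transparency — the contraction constant $(MT)^n/n!$ and the Lipschitz constant $e^{MT}$ are explicit, and no semigroup machinery or external citation is needed — while the paper's formulation is the one that would generalize if $\mathbb{A}$ were replaced by a genuinely unbounded generator. Two small points worth keeping as you wrote them: the continuity of $t\mapsto\mathcal{A}(t)$ (in fact norm continuity, via $\|k_\xi-k_\eta\|_H^2=k(\xi,\xi)-2k(\xi,\eta)+k(\eta,\eta)$ and continuity of the kernel) is asserted but not argued in the paper, and your explicit remark that $\Gamma^{-1}\in\mathcal{L}(H,H)$ must be read into the positivity assumption on $\Gamma$ is likewise implicit there.
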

\begin{proof}
We can split the governing Equation \ref{eq:eom_rkhs} into the form
\begin{align}
\begin{split}
\begin{Bmatrix}
\dot{\tilde{x}}(t)\\
\dot{{\tilde{f}}}(t)
\end{Bmatrix}
= 
&\begin{bmatrix}
A & 0\\
0 & A_0
\end{bmatrix}
\begin{Bmatrix}
\tilde{x}(t)\\
\tilde{f}(t)
\end{Bmatrix}+
\begin{bmatrix}
0 & B E_{(x(t))}\\
-\Gamma^1 (B E_{(x(t)})^* P & -A_0
\end{bmatrix}
\begin{Bmatrix}
\tilde{x}(t)\\
\tilde{f}(t)
\end{Bmatrix},
\end{split}
\end{align}
and write it more concisely as 
\begin{equation}
\dot{\tilde{X}} = \mathbb{A}\tilde{X}(t) + \mathbb{F}(t,\tilde{X}(t))
\end{equation}
where the operator $A_0 \in \mathcal{L}(H,H)$ is arbitrary. It is immediately clear that $\mathbb{A}$ is the infinitesimal generator of $C_0$ semigroup on $\mathbb{X}\equiv \mathbb{R}^d\times H$ since $\mathbb{A}$ is bounded on $\mathbb{X}$. In addition, we see the following:
\begin{enumerate} 
\item  The function $\mathbb{F}: \mathbb{R}^+ \times \mathbb{X} \to \mathbb{X}$ is uniformly globally Lipschitz continuous: there is a constant $L>0$ such that  
$$
\|\mathbb{F}(t,X)-\mathbb{F}(t,Y)\| \leq L\|X-Y\|
$$ 
for all $ X,Y \in \mathbb{X}$ and $t\in [0,T]$. 
\item The map $t \mapsto \mathbb{F}(t,X)$ is continuous on $[0,T]$ for each fixed $X\in \mathbb{X}$.
\end{enumerate}
By Theorem 1.2, p.184, in reference \cite{pazy},  there is a unique mild solution 
$$\tilde{X} = \{\tilde{x},\tilde{f}\}^T \in C([0,T];\mathbb{X})\equiv C([0,T];\mathbb{R}^d\times H). $$
In fact the map $\tilde{X}_0 \mapsto X$ is Lipschitz continuous from $\mathbb{X}\to C([0,T];\mathbb{X})$.
\end{proof}
The proof of stability of the equilibrium at the origin of the RKHS 
Equation \ref{eq:eom_rkhs} closely resembles the Lyapunov analysis of Equation \ref{eq:error_conv}; the extension to consideration of the infinite dimensional state space $\mathbb{X}$ is required.
It is useful to carry out this analysis in some detail to see how the adjoint $E_x^* :\mathbb{R}\to H $ of the evaluation functional $E_x : H \to \mathbb{R}$ plays a central and indispensable role in the study of the stability of evolution equations on the RKHS.
\begin{theorem}
\label{th:stability}
Suppose that the RKHS Equations \ref{eq:eom_rkhs} have a unique solution in $C([0,\infty);H)$ for every initial condition $X_0$ in some open ball $B_r (0) \subseteq \mathbb{X}$. Then the equilibrium at the origin  is Lyapunov stable. Moreover, the state error $\tilde{x}(t) \rightarrow 0$ as $t \rightarrow \infty$.  
\end{theorem}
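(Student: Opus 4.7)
The plan is to carry out the classical Lyapunov argument for MRAC, but with the finite dimensional parameter vector replaced by the distributed parameter $\tilde f$ in $H$. The central observation that makes this work in an RKHS is precisely what motivates the appearance of the adjoint $E_x^\ast$ in the learning law: the cross terms that arise when differentiating the Lyapunov functional pair up as $(\,\cdot\,,\,\cdot\,)_{\mathbb{R}^d}$ and $(\,\cdot\,,\,\cdot\,)_H$ inner products that are related by $E_x^\ast$ and therefore cancel.

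First I would introduce the Lyapunov functional
\begin{equation*}
V(\tilde{x},\tilde{f}) := \tilde{x}^{T} P \tilde{x} + (\Gamma \tilde{f},\tilde{f})_{H},
\end{equation*}
where $P$ is the symmetric positive definite solution of $A^{T}P + PA = -Q$ postulated in the setup, and $\Gamma$ is the self-adjoint positive operator in the learning law. Because $P$ is positive definite on $\mathbb{R}^{d}$ and $\Gamma$ is positive definite on $H$, $V$ is equivalent to $\|\tilde{x}\|^{2} + \|\tilde{f}\|_{H}^{2}$ on $\mathbb{X}$. Differentiating along the flow given by \ref{eq:eom_rkhs} and using $\dot{\tilde{x}} = A\tilde{x} + BE_{x(t)}\tilde{f}$ together with $\dot{\tilde{f}} = -\Gamma^{-1}(BE_{x(t)})^{\ast}P\tilde{x}$, I obtain
\begin{equation*}
\dot{V} = -\tilde{x}^{T}Q\tilde{x} + 2\bigl(P\tilde{x},\,BE_{x(t)}\tilde{f}\bigr)_{\mathbb{R}^{d}} - 2\bigl(\tilde{f},\,(BE_{x(t)})^{\ast}P\tilde{x}\bigr)_{H}.
\end{equation*}
The last two terms are equal by the very definition of the adjoint $E_{x(t)}^{\ast}$, so they cancel and $\dot{V} = -\tilde{x}^{T}Q\tilde{x} \le 0$. (Since the theorem only guarantees mild solutions, I would justify the chain rule either by noting that under the hypotheses of Theorem \ref{th:unique} the right-hand side of \ref{eq:eom_rkhs} is continuous in $t$ so mild solutions are in fact classical, or by a standard approximation argument.)

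From $\dot V \le 0$ and the coercivity of $V$, any trajectory with $V(0) < r^{2}\min(\lambda_{\min}(P),\lambda_{\min}(\Gamma))$ stays in a bounded set, which gives Lyapunov stability of the origin on the ball where solutions are assumed to exist globally. Integrating the inequality yields
\begin{equation*}
\lambda_{\min}(Q)\int_{0}^{\infty}\|\tilde{x}(s)\|^{2}\,ds \le V(0) - \lim_{t\to\infty}V(t) < \infty,
\end{equation*}
so $\tilde{x} \in L^{2}(0,\infty;\mathbb{R}^{d})$.

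To upgrade this to pointwise convergence $\tilde{x}(t) \to 0$, I would invoke Barbalat's lemma, which requires uniform continuity of $\tilde{x}$. Since $\tilde{x}$ is absolutely continuous, it suffices to bound $\dot{\tilde{x}}(t) = A\tilde{x}(t) + BE_{x(t)}\tilde{f}(t)$ uniformly in $t$. The term $A\tilde x$ is bounded because $\tilde x$ is bounded (by boundedness of $V$). For $BE_{x(t)}\tilde f$, the reproducing property gives $E_{x(t)}\tilde{f} = (k_{x(t)},\tilde{f})_{H}$, so that $|E_{x(t)}\tilde f| \le \|k_{x(t)}\|_H\|\tilde f\|_H \le C\|\tilde f\|_H$ by the uniform embedding hypothesis \ref{6} inherited from Theorem \ref{th:unique}; and $\|\tilde f\|_H$ is bounded since $V$ is non-increasing. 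Hence $\dot{\tilde x}$ is uniformly bounded, $\tilde x$ is uniformly continuous, and Barbalat's lemma gives $\tilde{x}(t)\to 0$ as $t\to\infty$. The step I expect to need the most care is the justification of the chain rule for $V$ along mild solutions together with the adjoint cancellation; the Barbalat argument is then routine once the uniform embedding is used to control the time-varying term $E_{x(t)}\tilde f$.
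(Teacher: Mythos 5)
Your proposal is correct and follows essentially the same route as the paper: the same Lyapunov functional $\tilde{x}^{T}P\tilde{x} + (\Gamma\tilde{f},\tilde{f})_{H}$, the same cancellation of the cross terms via the adjoint $E_{x(t)}^{\ast}$, a coercivity/sublevel-set argument for stability, and Barbalat's lemma with the uniform embedding bound on $E_{x(t)}\tilde{f}$ to conclude $\tilde{x}(t)\to 0$. Your explicit remark on justifying the chain rule along mild solutions is a point of care the paper passes over silently, but it does not change the argument.
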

\begin{proof}
Define the Lyapunov function $V:\mathbb{X} \to \mathbb{R}$ as 
$$ V \begin{Bmatrix}
\tilde{x}\\
\tilde{f}
\end{Bmatrix}
= \frac{1}{2}\tilde{x}^T P\tilde{x} + \frac{1}{2}(\Gamma \tilde{f},\tilde{f})_H.
$$
This function is norm continuous and positive definite on any neighborhood of the origin since $ V(X) \geq \|X\|^2_{\mathbb{X}}$ for all $X \in \mathbb{X}$. For any $X$, and in particular over the open set $B_r(0)$, the derivative of the Lyapunov function $V$ along trajectories of the system is given as 
\begin{align*}
\dot{V} &= \frac{1}{2}(\dot{\tilde{x}}^T P\tilde{x}+\tilde{x}^TP\dot{\tilde{x}})+(\Gamma \tilde{f},\dot{\tilde{f}})_H\\
&= -\frac{1}{2}\tilde{x}^T Q\tilde{x}+(\tilde{f},E_x^*B^*P\tilde{x}+\Gamma\dot{\tilde{f}})_{H}= -\frac{1}{2}\tilde{x}^T Q\tilde{x},
\end{align*}
since $(\tilde{f},E_x^*B^*P\tilde{x}+\Gamma\dot{\tilde{f}})_{H}=0$.
Let $\epsilon$ be some constant such that $0 < \epsilon < r$. Define $\gamma (\epsilon)$ and $\Omega_\gamma$ according to 
$$\gamma(\epsilon) = \inf_{\|X\|_\mathbb{X}=\epsilon} V(X),$$
$$\Omega_\gamma = \{X \in \mathbb{X}|V(X)<\gamma \}.$$
We can picture these quantities as shown in Fig. \ref{fig:lyapfun} and Fig. \ref{fig:kernels}.
\begin{figure}
\centering
\includegraphics[scale=0.35]{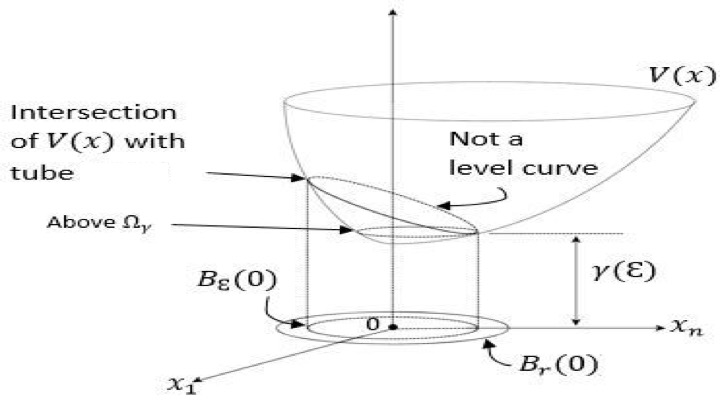}
\caption{Lyapunov function, $V(x)$}
\label{fig:lyapfun}
\end{figure}
\begin{figure}
\centering
\includegraphics[scale=0.55]{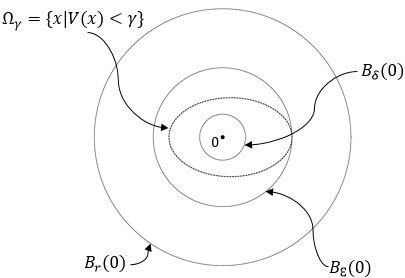}
\caption{Stability of the equilibrium}
\label{fig:kernels}
\end{figure}
But $\Omega_\gamma=\{X\in \mathbb{X}|V(X)<\gamma\}$ is an open set since it  is the inverse image of the open set $(-\infty,\gamma) \subset \mathbb{R}$ under the continuous mapping $V:\mathbb{X} \to \mathbb{R}$. The set $\Omega_\gamma$ therefore contains an open neighborhood of each of its elements.  Let $\delta>0$ be the radius of such an open ball containing the origin with  $B_\delta(0) \subset \Omega_\gamma$. 
Since $\overline{\Omega}_\gamma:=\{X\in \mathbb{X}|V(X)\leq \gamma\}$ is a  level set of $V$ and $V$ is non-increasing, it is a positive invariant set. Given any initial condition $x_0 \in B_\delta(0) \subseteq \Omega_\gamma$, we know that the trajectory $x(t)$ starting at $x_0$ satisfies
$x(t) \in \overline{\Omega}_\gamma \subseteq \overline{B_\epsilon(0)}  \subseteq B_r(0)$  for all $t\in [0,\infty)$. 
The equilibrium at the origin is stable.

The convergence of the state estimation error $\tilde{x}(t) \rightarrow 0$ as $t\rightarrow \infty$  can be based on Barbalat's lemma by modifying the conventional arguments for ODE systems.  Since $\frac{d}{dt}(V(X(t))) = - \frac{1}{2} \tilde{x}^T(t) Q \tilde{x}\leq 0$,  $V(X(t))$ is non-increasing and bounded below by zero. There is a constant $V_\infty:=\lim_{t \rightarrow \infty}V(X(t))$, and we have
$$
V(X_0)-V_\infty = \int_0^\infty \tilde{x}^T(\tau)Q\tilde{x} d\tau \gtrsim \|\tilde{x}\|^2_{L^2((0,\infty);\mathbb{R}^d)}.
$$
Since $V(X(t)) \leq V(X_0)$, we likewise have $\|\tilde{x}\|_{L^\infty(0,\infty)}\lesssim V(X_0)$ and $\|\tilde{f}\|_{L^\infty((0,\infty);H)}\lesssim V(X_0)$. The equation of motion enables a uniform bound on $\dot{\tilde{x}}$ since 
\begin{align}
&\|\dot{\tilde{x}}(t)\|_{\mathbb{R}^d}
\leq \|A\| \| \tilde{x}(t)\|_{\mathbb{R}^d}
+ \|B\| \|E_{x(t)} \tilde{f}(t)\|_{\mathbb{R}^d}, \notag \\
&\leq \|A\| \| \tilde{x}(t)\|_{\mathbb{R}^d}
+ \tilde{C} \|B\|  \| \tilde{f}(t) \|_{H},\\
& \leq \|A\| \|\tilde{x}\|_{L^\infty((0,\infty);\mathbb{R}^d)} 
+ \tilde{C} \|B\|  \| \tilde{f} \|_{L^\infty((0,\infty),H)}. \notag
\end{align}
Since $\tilde{x}\in L^\infty((0,\infty);\mathbb{R}^d)) \cap L^2((0,\infty);\mathbb{R}^d)$ and $\dot{\tilde{x}} \in L^\infty((0,\infty);\mathbb{R}^d)$, we conclude by generalizations of  Barbalat's lemma \cite{Farkas2016Variations} that $\tilde{x}(t) \rightarrow 0$ as $t \to \infty$.
\end{proof}

It is evident that Theorem \ref{th:stability} yields results about stability and convergence over the RKHS of the state estimate error to zero that are analogous to typical results for conventional ODE systems. As expected, conclusions for the convergence of the function estimates $\hat{f}$ to $f$ are more difficult to generate, and they rely on {\em persistency of excitation } conditions that are suitably extended to the RKHS framework.
\begin{mydef}
We say that the plant in the RKHS Equation ~\ref{eq:rkhs_plant} is {\em strongly persistently exciting} if there exist constants $\Delta,\gamma>0,\text{ and }T$ such that for $f\in H$ with $\|f\|_H=1$ and $t>T$ sufficiently large, 
$$
\int_{t}^{t+\Delta}
\left(E^*_{x(\tau)}E_{x(\tau)}f,f\right)_H d\tau \gtrsim \gamma.
$$
\end{mydef}
As in the consideration of ODE systems, persistency of excitation is sufficient to guarantee convergence of the function parameter estimates to the true function.

\begin{theorem}
\label{th:PE}
Suppose that the plant in Equation \ref{eq:rkhs_plant} is strongly persistently exciting and that either (i) the function $k(x(.),x(.)) \in L^1((0,\infty);\mathbb{R})$, or (ii) the matrix $-A$ is coercive in the sense that $(-Av,v)\geq c\|v\|^2$ $\forall$ $v\in\mathbb{R}^d$ and $\Gamma =P=I_d$. Then the parameter function error $\tilde{f}$ converges strongly to zero,
$$
\lim_{t\rightarrow \infty} \| f-\hat{f}(t) \|_H = 0.
$$
\end{theorem}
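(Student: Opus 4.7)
The strategy is to apply the strong PE condition at each large time $t$ to the fixed vector $\tilde{f}(t)\in H$ and then show that the resulting right-hand side can be made arbitrarily small. From Theorem \ref{th:stability} I inherit $\tilde{x}(t)\to 0$, $\tilde{x}\in L^2((0,\infty);\mathbb{R}^d)\cap L^\infty((0,\infty);\mathbb{R}^d)$, and $\tilde{f}\in L^\infty((0,\infty);H)$. The first substantive step is to prove $\dot{\tilde{f}}\in L^2((0,\infty);H)$. From the learning law, $\|\dot{\tilde{f}}(t)\|_H \le \|\Gamma^{-1}\|\,|B|\,\|P\|\,\|E_{x(t)}^*\|\,|\tilde{x}(t)|$ with $\|E_{x(t)}^*\|^2 = k(x(t),x(t))$. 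Under (i), $k(x(\cdot),x(\cdot))\in L^1$ and $\tilde{x}\in L^\infty$ make $\|\dot{\tilde{f}}\|_H^2$ integrable; under (ii), $\|E_{x(t)}^*\|\le \tilde{C}$ uniformly and $\tilde{x}\in L^2$ give the same conclusion.

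Next, apply strong PE to the fixed element $\tilde{f}(t)$ for $t\ge T$ to get $\gamma\|\tilde{f}(t)\|_H^2 \le \int_t^{t+\Delta}|E_{x(\tau)}\tilde{f}(t)|^2\,d\tau$. Writing $\tilde{f}(t)=\tilde{f}(\tau)+(\tilde{f}(t)-\tilde{f}(\tau))$ inside the integrand, applying $(a+b)^2\le 2a^2+2b^2$, and using the uniform embedding bound $|E_x g|\le \tilde{C}\|g\|_H$ yields
\[
\gamma\|\tilde{f}(t)\|_H^2 \;\le\; 2\int_t^{t+\Delta}|E_{x(\tau)}\tilde{f}(\tau)|^2\,d\tau \;+\; 2\tilde{C}^2\int_t^{t+\Delta}\|\tilde{f}(t)-\tilde{f}(\tau)\|_H^2\,d\tau.
\]
Cauchy--Schwarz gives $\|\tilde{f}(t)-\tilde{f}(\tau)\|_H^2 \le (\tau-t)\int_t^\tau\|\dot{\tilde{f}}(s)\|_H^2\,ds$, so the second term is bounded by $2\tilde{C}^2\Delta^2\int_t^{t+\Delta}\|\dot{\tilde{f}}(s)\|_H^2\,ds$, which tends to zero as $t\to\infty$ by the preceding step.

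It remains to show that $\int_t^{t+\Delta}|y(\tau)|^2\,d\tau\to 0$, where $y(\tau):=E_{x(\tau)}\tilde{f}(\tau)=\tilde{f}(\tau)(x(\tau))$. Under (i) this is immediate, since $|y(\tau)|^2\le k(x(\tau),x(\tau))\|\tilde{f}\|_\infty^2$ and $k(x(\cdot),x(\cdot))\in L^1$. Under (ii) I would exploit the error equation in the form $By(\tau)=\dot{\tilde{x}}(\tau)-A\tilde{x}(\tau)$: integrating $|B|^2 y^2 = y\,B^T(\dot{\tilde{x}}-A\tilde{x})$ and splitting the $yB^T\dot{\tilde{x}}$ piece by parts should produce boundary terms that vanish because $\tilde{x}(t)\to 0$ together with integrals controllable by $\int_t^{t+\Delta}|\tilde{x}|^2\,d\tau\to 0$, provided the coercivity of $-A$ and the normalisation $\Gamma=P=I$ are used to close the estimate. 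The main obstacle lies precisely here: $\tilde{x}\in L^2\cap L^\infty$ alone does not force $\dot{\tilde{x}}\in L^2$, so $y\in L^2$ cannot be read off directly from the error equation; the argument must couple the $\tilde{x}$- and $\tilde{f}$-equations in a manner analogous to observability-Gramian manipulations in classical adaptive control. Once the first integral is also shown to vanish, the PE bound forces $\|\tilde{f}(t)\|_H\to 0$, completing the proof.
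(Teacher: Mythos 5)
Your treatment of case (i) is correct and is in the same spirit as the paper's: both arguments come down to testing the persistency--of--excitation bound against the function error and then killing the right--hand side via Cauchy--Schwarz and the reproducing property, using $\|\kappa_{x(\tau)}\|_H^2=k(x(\tau),x(\tau))\in L^1$. The paper is slightly leaner here: it applies PE to the \emph{frozen} element $\tilde f(t_k)$ along a sequence with $\|\tilde f(t_k)\|_H\ge\delta$ and bounds $\int_{t_k}^{t_k+\Delta}(\kappa_{x(\tau)},\tilde f(t_k))_H^2\,d\tau\le\|\tilde f(t_k)\|_H^2\int_{t_k}^{t_k+\Delta}k(x(\tau),x(\tau))\,d\tau$, obtaining a contradiction from the vanishing $L^1$ tail, so the splitting $\tilde f(t)=\tilde f(\tau)+(\tilde f(t)-\tilde f(\tau))$ and the auxiliary fact $\dot{\tilde f}\in L^2((0,\infty);H)$ are not needed for (i) (though they are correct; just note that under (i) alone you should bound $|E_{x(\tau)}(\tilde f(t)-\tilde f(\tau))|^2\le k(x(\tau),x(\tau))\,\|\tilde f(t)-\tilde f(\tau)\|_H^2$ rather than invoke the uniform constant $\tilde C$, unless you keep the uniform embedding of Theorem \ref{th:unique} as a standing hypothesis).

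For case (ii) there is a genuine gap, and you have located it precisely: after your reduction everything hinges on showing $\int_t^{t+\Delta}|E_{x(\tau)}\tilde f(\tau)|^2\,d\tau\to 0$, i.e.\ that the ``output'' $y=E_{x(\cdot)}\tilde f(\cdot)$ vanishes in a sliding $L^2$ sense, and neither $\tilde x\in L^2\cap L^\infty$ nor the error equation alone delivers this; your integration--by--parts sketch would additionally require differentiating $\tau\mapsto E_{x(\tau)}\tilde f(\tau)$, hence smoothness of the kernel and bounds on $\dot x$, and is not closed as written. Be aware that the paper does not close this case from first principles either: it verifies the hypotheses of Theorem 3.4 of the Gelfand--triple adaptive estimation framework in \cite{bsdr1997} --- with $\Gamma=P=I$ the error system takes the form of their Equations 2.20, the trivial triple $\mathbb{Y}=\mathbb{X}=\mathbb{R}^d\times H$ makes the G\aa rding inequality immediate, coercivity of $-A$ gives their assumption (A4), and $BE_{x(t)}$ is identified with their control influence operator --- and then quotes that theorem. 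So to repair your proof you should either perform this reduction and cite the infinite--dimensional PE convergence result, or carry out in full the coupled $(\tilde x,\tilde f)$ argument you allude to; as it stands, case (ii) is an honest conjecture, not a proof.
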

\begin{proof}
We begin by assuming $(i)$ holds, 
In the proof of Theorem \ref{th:stability} it is shown that $V$ is bounded below and non-increasing, and therefore approaches a limit
$$
\lim_{t\rightarrow \infty} V(t)=V_\infty< \infty.
$$
Since $\tilde{x}(t) \rightarrow 0$ as $t\rightarrow \infty$, we can conclude that the limit
$$
\lim_{t\rightarrow \infty} \| \tilde{f}(t) \|_H \lesssim V_\infty.
$$
Suppose that $V_\infty \not = 0.$ Then there exists a positive, increasing sequence of  times $\left\{ t_k\right \}_{k\in \mathbb{N}}$  with $\lim_{k\rightarrow \infty} t_k = \infty$ and  some constant $\delta>0$ 
such that 
$$
\| \tilde{f}(t_k)\|^2_H \ge \delta
$$
for all $k\in\mathbb{N}$. 
Since the RKHS is persistently exciting, we can write
\begin{align*}
\int^{t_k+\Delta}_{t_k} \left(E^{*}_{x(\tau)}E_{x(\tau)}\tilde{f}(t_k),\tilde{f}(t_k)\right)_Hd\tau \gtrsim \gamma \| \tilde{f}{(t_k)}\|_{H}^{2} \geq \gamma \delta
\end{align*}

for each $k\in \mathbb{N}$. By the reproducing property of the RKHS, we can then see that 
\begin{align*}
\gamma \delta \leq \gamma \| \tilde{f}(t_k) \|_H^2 &\lesssim \int_{t_k}^{t_k + \Delta} \left ( \kappa_{x(\tau)}, \tilde{f}(t_k) \right )_H^2 d\tau\\
&\leq \|\tilde{f}(t_k)\|_H^2 \int_{t_k}^{t_k + \Delta} \|\kappa_{x(\tau)} \|_H^2 d\tau \\
&= \| \tilde{f}(t_k) \|_H^2
\int_{t_k}^{t_k+\Delta} \left (\kappa_{x(\tau)},\kappa_{x(\tau)}\right )_H d\tau \\
& = \| \tilde{f}(t_k) \|_H^2 
\int_{t_k}^{t_k+\Delta} \kappa(x(\tau),x(\tau)) d\tau.
\end{align*}
Since $\kappa_r(x(.),x(.)) \in L^1((0,\infty);\mathbb{R})$ by assumption, when we take the limit as $k\rightarrow \infty$, we obtain the contradiction $0<\gamma \leq 0$.  We conclude therefore that $V_\infty=0$ and $\lim_{t\rightarrow \infty} \|\tilde{f}(t)\|_H = 0$. 

We outline the proof when (ii) holds, which is based on slight modifications of arguments that appear in \cite{d1993,bsdr1997,dr1994,dr1994pe,bdrr1998,kr1994} that treat a different class of infinite dimensional nonlinear systems whose state space is cast in terms of a Gelfand triple. 
Perhaps the simplest analysis follows from \cite{bsdr1997} for this case. Our hypothesis that $\Gamma=P=I_d$ reduces Equations \ref{eq:eom_rkhs} to the form of Equations 2.20 in \cite{bsdr1997}. The assumption that $-A$ is coercive in our theorem implies the coercivity assumption (A4) in \cite{bsdr1997} holds. If we define $\mathbb{X}=\mathbb{Y}:=\mathbb{R}^n \times H$, then it is clear that the imbeddings $\mathbb{Y} \rightarrow \mathbb{X} \rightarrow \mathbb{Y}$ are continuous and dense, so that they define a  Gelfand triple.  Because of the trivial form of the Gelfand triple in this case, it is immediate that the Garding inequality holds in Equation 2.17 in \cite{bsdr1997}.  
 We identify $BE_{x(t)}$ as the control influence operator $\mathcal{B}^*(\overline{u}(t))$ in \cite{bsdr1997}.
Under these conditions, Theorem ~\ref{th:PE} follows from Theorem 3.4 in \cite{bsdr1997} as a special case.
\end{proof}
   \section{Finite Dimensional Approximations}
   \label{sec:finite}
\subsection{Convergence of Finite Dimensional Approximations}
The governing system in Equations \ref{eq:eom_rkhs} constitute a distributed parameter system since the functions $\tilde{f}(t)$ evolve in the infinite dimensional space $H$.  In practice these equations must be approximated by some finite dimensional system.  Let $\{H_n\}_{n\in\mathbb{N}_0} \subseteq H$ be a nested sequence of subspaces. Let $\Pi_j$ be a collection of approximation operators $
\Pi_j:{H}\rightarrow {H}_n$ such that  $\lim_{j\to \infty}\Pi_j f = f$ for all $f\in H$ and $\sup_{j\in \mathbb{N}_0} \|\Pi_j\| \leq C $ for a constant $C > 0$. Perhaps the most evident example of such collection might choose $\Pi_j$ as the $H$-orthogonal projection for a dense collection of subspaces $H_n$. It is also common to choose $\Pi_j$ as a uniformly bounded family of quasi-interpolants \cite{devore1998}. We next construct a finite dimensional approximations $\hat{x}_j$ and $\hat{f}_j$ of the online estimation equations in
\begin{align}
\dot{\hat{x}}_j(t)  & = A\hat{x}_j(t) + 
B E_{x(t)} \Pi^*_j  \hat{f}_j(t), \label{eq:approx_on_est1} \\
\dot{\hat{f}}_j(t) & =  \Gamma_j^{-1}\left ( B E_{x(t)} \Pi^*_j  \right)^* P\tilde{x}_j(t)
\label{eq:approx_on_est2}
\end{align}
with $\tilde{x}_j:=x-\hat{x}_j$. 
It is important to note that in the above equation $
\Pi_j:{H}\rightarrow {H}_n$, and $\Pi_j^*:{H}_n\rightarrow {H}$.
\begin{theorem}
Suppose that $x \in C([0,T],\mathbb{R}^d)$ and that the embedding $i:H \to C(\Omega)$ is uniform in the sense that 
\begin{equation}
\label{6}
\|f\|_{C(\Omega)}\equiv \|if\|_{C(\Omega)} \leq C\|f\|_H.
\end{equation}
Then for any $T>0$, 
\begin{align*}
\| \hat{x} - \hat{x}_j\|_{C([0,T];\mathbb{R}^d)} &\rightarrow 0,\\
\|\hat{f} - \hat{f}_j\|_{C([0,T];H)} &\rightarrow 0,
\end{align*}
as $j\rightarrow \infty$.
\end{theorem}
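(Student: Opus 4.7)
The plan is to derive coupled error equations for the state and parameter errors, exhibit the result as a linear time-varying evolution on $\mathbb{R}^d \times H$ driven by vanishing forcing, and close with Gronwall's inequality. First I would record existence and uniqueness of the finite-dimensional system (\ref{eq:approx_on_est1})--(\ref{eq:approx_on_est2}) on $C([0,T]; \mathbb{R}^d \times H_n)$ by repeating the mild-solution argument of Theorem \ref{th:unique}: the uniform embedding gives $\|E_{x(t)}\|_{H \to \mathbb{R}} \leq C$, and $\sup_j \|\Pi_j\| < \infty$ ensures the composition $E_{x(t)} \Pi_j^*$ is bounded uniformly in $j$ and $t$. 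I would also assume the natural initialization $\hat{x}_j(0) = \hat{x}(0)$ and $\hat{f}_j(0) = \Pi_j \hat{f}(0)$ so that the initial errors vanish or at worst vanish with $j$.

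Decompose the parameter error as
$$ \hat{f} - \Pi_j^* \hat{f}_j = \eta_j + \Pi_j^* \xi_j, \quad \eta_j := (I - \Pi_j^* \Pi_j) \hat{f}, \quad \xi_j := \Pi_j \hat{f} - \hat{f}_j \in H_n. $$
Here $\eta_j$ is a projection residual and $\xi_j$ is a dynamic residual living inside the approximation space. Since $\Pi_j \to I$ strongly, the family $\{\Pi_j\}$ is uniformly bounded, and $\hat{f}([0,T])$ is compact in $H$, an equicontinuity argument gives $\|\eta_j\|_{C([0,T]; H)} \to 0$. Setting $e_x := \hat{x} - \hat{x}_j$ and using $(B E_{x(t)} \Pi_j^*)^* = \Pi_j (B E_{x(t)})^*$ together with $\tilde{x}_j = \tilde{x} + e_x$, differencing the estimator and learning laws yields
\begin{align*}
\dot{e}_x &= A e_x + B E_{x(t)} \bigl( \eta_j + \Pi_j^* \xi_j \bigr), \\
\dot{\xi}_j &= - \Gamma_j^{-1} \Pi_j (B E_{x(t)})^* P \, e_x + \bigl( \Pi_j \Gamma^{-1} - \Gamma_j^{-1} \Pi_j \bigr) (B E_{x(t)})^* P \, \tilde{x}(t).
\end{align*}

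Packaging $Z_j := (e_x, \xi_j)^T$, this reads $\dot{Z}_j = \mathcal{A}_j(t) Z_j + \mathcal{G}_j(t)$ where $\|\mathcal{A}_j(t)\|$ is bounded on $[0,T]$ uniformly in $j$ and the forcing $\mathcal{G}_j$ collects the $\eta_j$ and gain-mismatch contributions. Gronwall's inequality then delivers $\|Z_j\|_{C([0,T];\mathbb{X})} \leq e^{MT} \bigl(\|Z_j(0)\| + T \|\mathcal{G}_j\|_{C([0,T];\mathbb{X})}\bigr)$, which tends to zero provided both forcing terms do. Using the uniform bound on $\Pi_j^*$, the convergence $\xi_j \to 0$ in $H_n$ transfers to $\hat{f} - \hat{f}_j \to 0$ in $C([0,T]; H)$, and $e_x \to 0$ in $C([0,T]; \mathbb{R}^d)$ is direct.

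The main obstacle is handling the gain-operator mismatch $\Pi_j \Gamma^{-1} - \Gamma_j^{-1} \Pi_j$: the theorem statement is silent on how $\Gamma_j$ is constructed from $\Gamma$, yet this forcing must vanish for the result to hold. A compatibility hypothesis is therefore required --- for instance $\Gamma_j := \Pi_j \Gamma \Pi_j^*$ together with the strong-convergence condition $\|I - \Gamma \Pi_j^* \Gamma_j^{-1} \Pi_j\| \to 0$ alluded to in Section \ref{subsec:Lit}. A lesser subtlety is establishing $\eta_j \to 0$ uniformly on $[0,T]$ rather than only pointwise, which requires upgrading strong convergence of $\Pi_j$ to uniform convergence on the compact curve $\hat{f}([0,T]) \subset H$; this is automatic given $\sup_j \|\Pi_j\| < \infty$.
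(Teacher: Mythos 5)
Your proposal is correct in substance and follows the same basic route as the paper: subtract the finite-dimensional system \ref{eq:approx_on_est1}--\ref{eq:approx_on_est2} from the infinite-dimensional estimator, choose $\hat{x}_j(0)=\hat{x}(0)$ and $\hat{f}_j(0)=\Pi_j\hat{f}(0)$, and close with Gronwall on $[0,T]$. The differences are in how the coupling error is organized. The paper works directly with $\overline{x}_j=\hat{x}-\hat{x}_j$, $\overline{f}_j=\hat{f}-\hat{f}_j$ and runs an energy estimate on $\|\overline{x}_j\|^2+\|\overline{f}_j\|^2$, bounding the gain mismatch through the operator estimate $\|(\Gamma^{-1}-\Pi_j^*\Gamma_j^{-1}\Pi_j)g\|_H\lesssim\|I-\Pi_j\|\,\|g\|_H$; you instead split $\hat{f}-\Pi_j^*\hat{f}_j=\eta_j+\Pi_j^*\xi_j$ and treat $\eta_j$ as forcing, using compactness of the trajectory $\hat{f}([0,T])$ to upgrade strong convergence of $\Pi_j$ to uniform smallness. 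That is a genuine (and arguably safer) refinement: the paper's final step needs $\|I-\Pi_j\|\to 0$ in operator norm, while your $\eta_j$ argument only needs strong convergence on a compact set. Regarding what you call the main obstacle: the paper does resolve it, though only implicitly mid-proof, by taking $\Gamma_j:=\Pi_j\Gamma\Pi_j^*$, which yields the algebraic identity $\Pi_j(I-\Gamma\Pi_j^*\Gamma_j^{-1}\Pi_j)=0$ and uniform boundedness of $\Gamma\Pi_j^*(\Pi_j\Gamma\Pi_j^*)^{-1}\Pi_j$; this is exactly the compatibility hypothesis you ask for, and no separate assumption $\|I-\Gamma\Pi_j^*\Gamma_j^{-1}\Pi_j\|\to 0$ is needed beyond the $\|I-\Pi_j\|\to 0$ the paper already invokes. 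In fact, with that choice of $\Gamma_j$ your own proof closes without any operator-norm hypothesis: writing $\Pi_j\Gamma^{-1}-\Gamma_j^{-1}\Pi_j=-(\Pi_j\Gamma\Pi_j^*)^{-1}\Pi_j\Gamma(I-\Pi_j^*\Pi_j)\Gamma^{-1}$ and noting $\|(\Pi_j\Gamma\Pi_j^*)^{-1}\|$ is uniformly bounded for positive definite $\Gamma$, the mismatch forcing acts on the compact set $\{E_{x(t)}^*B^TP\tilde{x}(t):t\in[0,T]\}\subset H$ and vanishes uniformly by the same compactness argument you used for $\eta_j$. So the only real gap in your write-up is that you leave this term as a hypothesis rather than verifying it for the natural Galerkin gain $\Gamma_j=\Pi_j\Gamma\Pi_j^*$, which is a one-line computation once that definition is fixed.
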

\begin{proof}
Define the operators $\Lambda(t):= B E_{x(t)}:H\rightarrow \mathbb{R}^d$ and for each $t\geq 0$, introduce the measures of state estimation error $\overline{x}_j:=\hat{x}-\hat{x}_j$, and define the function estimation error $\overline{f}_j
=\hat{f}-\hat{f}_j$. 
Note that $\tilde{x}_j:=x-\hat{x}_j=x-\hat{x} + \hat{x}-\hat{x}_j=\tilde{x}+ \overline{x}_j$.
The time derivative of the error induced by  approximation of the estimates can be expanded as follows:
\begin{align*}
&\frac{1}{2} \frac{d}{dt}\left (
( {\overline{x}}_j, {\overline{x}}_j )_{\mathbb{R}^d} + ({\overline{f}}_j,{\overline{f}}_j   )_H 
\right )  = 
( \dot{\overline{x}}_j, {\overline{x}}_j )_{\mathbb{R}^d} + (\dot{\overline{f}}_j,{\overline{f}}_j   )_H 
   \\
&= (A\overline{x}_j + \Lambda \overline{f}_j , \overline{x}_j)_{\mathbb{R}^d} + 
\left ( 
\left (\Gamma^{-1}-\Pi_j^*\Gamma_j^{-1}\Pi_j \right )
\Lambda^*P \tilde{x}, \overline{f}_j
\right )_H 
-\left (\Pi_j^* \Gamma_j^{-1} \Pi_j \Lambda^* P \overline{x}_j,\overline{f}_j \right)_H
\\
&\leq C_A \| \overline{x}_j \|^2_{\mathbb{R}^d} + \|\Lambda\| \| \overline{f}_j \|_{H} \| \overline{x}_j \|_{\mathbb{R}^d} \\
&\quad \quad 
+ \| \Gamma^{-1} 
(I-\Gamma \Pi_j^*\Gamma_j^{-1}\Pi_j) \Lambda^* P \tilde{x}\|_{H} \|\overline{f}_j \|_H 
+\left \| 
\Pi_j^* \Gamma_j^{-1} \Pi_j \Lambda^* P  
\right \| \|\overline{x}_j\| 
\|\overline{f}_j \|
\\
& \leq 
C_A \| \overline{x}_j \|_{\mathbb{R}^d}^2 + \frac{1}{2}
\|\Lambda\| \left ( 
\| \overline{f}_j \|_{H}^2
+ \| \overline{x}_j \|_{\mathbb{R}^d}^2
\right ) 
+ \frac{1}{2}\|\Pi^*_j \Gamma_j^{-1} \Pi_j\|
\| \Lambda^*\| \|P\| \left (  \|\overline{x}_j\|^2_{\mathbb{R}^d} + \| \overline{f}_j\|_H \right ) 
\\
&\quad \quad
+ \frac{1}{2} \left (  
\Gamma^{-1} 
(I-\Gamma \Pi_j^*\Gamma_j^{-1}\Pi_j) \Lambda^* P \tilde{x}\|_{H}
+ 
\|\overline{f}_j \|^2_H 
\right) \\
& \leq
\frac{1}{2} \|\Gamma^{-1} \| \| \Lambda^*\| \|P\|
\| I-\Gamma \Pi_j^*\Gamma_j^{-1}\Pi_j \|^2\|\tilde{x}\|^2_{\mathbb{R}^d}
+\\
&\quad \quad
+\left (C_A + \frac{1}{2} \|\Lambda\| 
+ \frac{1}{2} C_B \|\Lambda^*\| \|P\|
\right ) \|\overline{x}_j\|^{2}_{\mathbb{R}^d}
+
\frac{1}{2} \left ( \|\Lambda\| + 1
+ \frac{1}{2} C_B \|\Lambda^*\| \|P\|\right) \|\overline{f}_j\|^{2}_H 
\end{align*}
We know that $\|\Lambda(t)\|=\|\Lambda^*(t)\|$ is bounded uniformly in time from the assumption that  $H$  is uniformly embedded in $C(\Omega)$.
We next consider the operator error that manifests in the term $(\Gamma^{-1} - \Pi^*_j \Gamma_j^{-1} \Pi_j)$. For any $g\in H$ we have
\begin{align*}
\| (\Gamma^{-1} - \Pi^*_j \Gamma_j^{-1} \Pi_j)g \|_H & =
\| \Gamma^{-1}( I - \Gamma \Pi^*_j \Gamma_j^{-1} \Pi_j)g \|_H \\
&\leq 
\| \Gamma^{-1} \|  
\|\left (\Pi_j + (I-\Pi_j)\right )( I - \Gamma \Pi^*_j \Gamma_j^{-1} \Pi_j)g \|_H \\
&\lesssim \| I-\Pi_j \| \|g\|_H.
\end{align*}
This final inequality follows since $\Pi_j(I - \Gamma \Pi^*_j \Gamma_j^{-1} \Pi_j)=0$ and 
$\Gamma \Pi^*_j \Gamma_j^{-1} \Pi_j\equiv\Gamma \Pi^*_j \left (\Pi_j \Gamma \Pi_j^* \right)^{-1} \Pi_j $ is uniformly bounded.
We then  can write
\begin{align*}
 \frac{d}{dt}\left (
 \|\overline{x}_j\|^2_{\mathbb{R}^d} + \|\overline{f}_j\|^2_H  
\right )
&\leq C_1 \| I-\Gamma \Pi_j^*\Gamma_j^{-1}\Pi_j \|^2 \\
&\quad \quad+ C_2 \left (\|\overline{x}_j\|^2_{\mathbb{R}^d} + \|\overline{f}_j\|^2_H \right )
\end{align*}
where $C_1,C_2>0$. We integrate this inequality over the interval $[0,T]$ and obtain
\begin{align*}
\|\overline{x}_j(t)\|^2_{\mathbb{R}^d}
+ \|\overline{f}_j(t)\|^2_H 
&\leq 
\|\overline{x}_j(0)\|^2_{\mathbb{R}^d}
+ \|\overline{f}_j(0)\|^2_H \\
&
+ C_1T  \| I-\Gamma\Pi_j^*\Gamma_j^{-1}\Pi_j \|^2 \\
&+ C_2\int_0^T \left ( 
\|\overline{x}_j(\tau)\|^2_{\mathbb{R}^d}
+ \|\overline{f}_j(\tau)\|^2_H
\right ) d\tau
\end{align*}
We can always choose $\hat{x}(0) = \hat{x}_j(0)$, so that $\overline{x}_j(0) = 0$. If we choose $\hat{f}_j(0):=\Pi_j\hat{f}(0)$ then,
\begin{align*}
\|\overline{f}_j(0)\| &= \|\hat{f}(0)-\Pi_j\hat{f}(0)\|_H\\
&\leq \|I-\Pi_j\|_H \|\hat{f}(0)\|_H.
\end{align*}
The non-decreasing term can be rewritten as $C_1T  \| I-\Gamma\Pi_j^* \Gamma_j^{-1} \Pi_j \|^2 \leq C_3 \|I-\Pi_j\|^2_H$. 
\begin{align}
\|\overline{x}_j(t)\|^2_{\mathbb{R}^d}
+ \|\overline{f}_j(t)\|^2_H 
&\leq  C_4\|I-\Pi_j\|^2_H+ C_2\int_0^T \left ( 
\|\overline{x}_j(\tau)\|^2_{\mathbb{R}^d}
+ \|\overline{f}_j(\tau)\|^2_H
\right ) d\tau
\label{eq:gron_last}
\end{align}
Let  $\alpha(t):=C_4\|I-\Pi_j\|^2_H$ and applying Gronwall's inequality to equation \ref{eq:gron_last}, we get
\begin{align}
\|\overline{x}_j(t)\|^2_{\mathbb{R}^d}
+ \|\overline{f}_j(t)\|^2_H 
&\leq \alpha(t) e^{C_2 T}
\end{align}
As $j\to \infty$ we get $\alpha(t) \to 0$, this implies $\overline{x}_j(t)\to 0$ and $\overline{f}_j(t)\to 0$.
Therefore the finite dimensional approximation converges to the infinite dimensional states in $\mathbb{R}^d \times H$.
\end{proof} 
	\section{Numerical Simulations}
    \label{sec:numerical}
\begin{figure}
\centering
\includegraphics[scale=0.3]{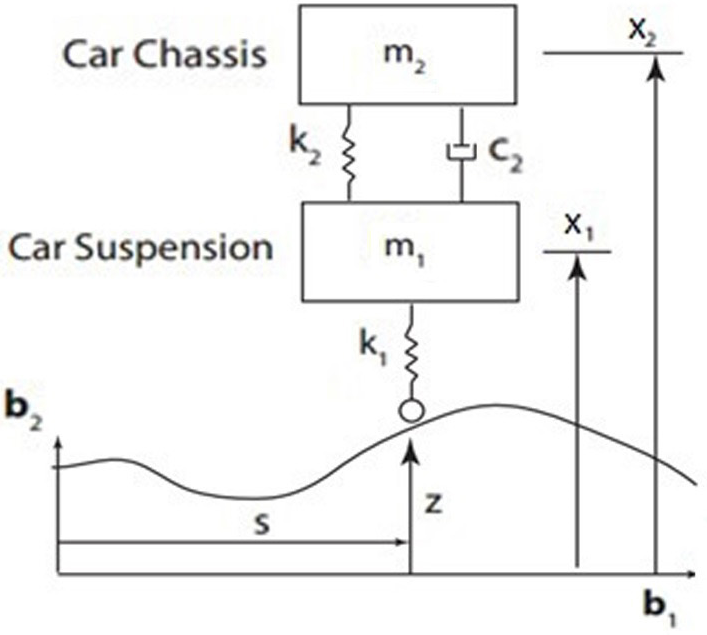}
\hspace{1cm}
\includegraphics[scale=0.3]{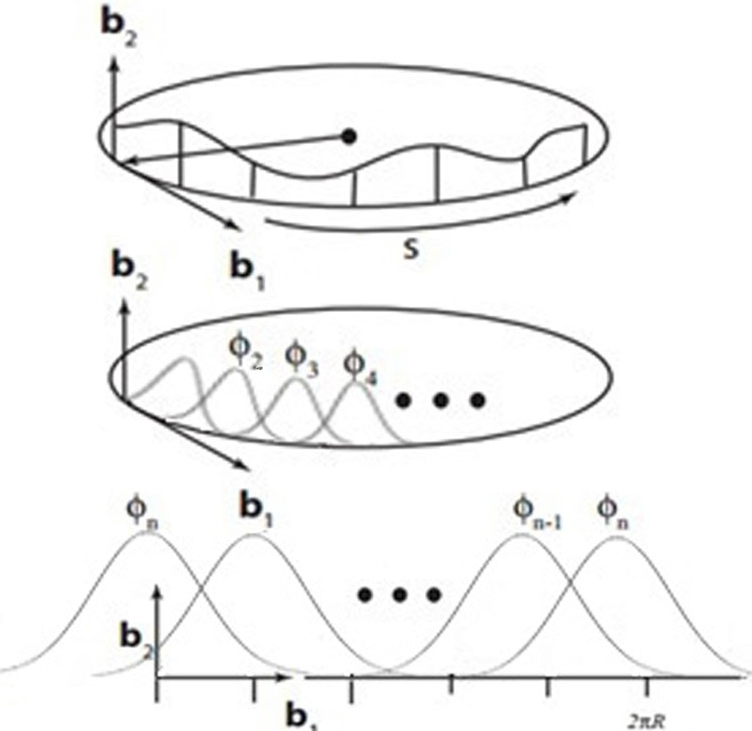}
\captionsetup{justification=justified,margin=1cm}
\caption{Experimental setup and definition of basis functions}
\label{fig:Model}
\end{figure}    
A schematic representation of a quarter car model consisting of a chassis, suspension and road measuring device is shown in Fig ~\ref{fig:Model}. In this simple model the displacement of car suspension and chassis are $x_1$ and $x_2$ respectively. The arc length $s$ measures the distance along the track that vehicle follows. The equation of motion for the two DOF model has the form,
\begin{equation}
M\ddot{x}(t)+C\dot{x}(t)+Kx(t)=Bf(s(t))
\end{equation}
with the mass matrix $M \in \mathbb{R}^{2\times2}$, the stiffness matrix $K \in \mathbb{R}^{2\times2}$, the damping matrix $C \in \mathbb{R}^{2\times2}$, the control influence vector $b \in \mathbb{R}^{2\times 1}$ in this example. The road profile is denoted by the unknown function $f:\mathbb{R} \to \mathbb{R}$. For simulation purposes, the car is assumed to traverse a circular path of radius $R$, so that we restrict attention to periodic round profiles $f : [0,R]\to \mathbb{R}$. To illustrate the methodology, we first assume that the unknown function, $f$ is  restricted to the class of uncertainty mentioned in Equation~\ref{eq:e2} and therefore can be approximated as
\begin{equation}
f(\cdot)=\sum_{i=1}^n{\alpha_i^*k_{x_i}(\cdot)}
\end{equation}
with $n$ as the number of basis functions, $\alpha_i^*$ are the  true unknown coefficients to be estimated, and $k_{x_i}(\cdot)$ are basis functions over the circular domain. 
Hence the state space equation can be written in the form
\begin{equation}
\dot{x}(t)=Ax(t)+B\sum_{i=1}^n{\alpha_i^*k_{x_i}(s(t))}.
\label{eq:num_sim}
\end{equation}
where the state vector $x = [\dot{x}_1,x_1,\dot{x}_2,x_2]$, the system matrix $A\in \mathbb{R}^{4 \times 4}$, and control influence matrix $B \in \mathbb{R}^{4 \times 1}$.
For the quarter car model shown in Fig. \ref{fig:Model} we derive the matrices, 
$$
A=\begin{bmatrix}
\frac{-c_2}{m_1} &\frac{-(k_1+k_2)}{m_1} &\frac{c_2}{m_1} &\frac{k_2}{m_1}\\
1 &0 &0 &0\\
\frac{-c_2}{m_2} &\frac{(k_2)}{m_2} &\frac{-c_2}{m_2} &\frac{-k_2}{m_2}\\
0 &0 &1 &0
\end{bmatrix}
\quad \text{and} \quad 
B=\begin{bmatrix}
\frac{k_1}{m_1}\\
0\\
0\\
0
\end{bmatrix}.
$$
Note that if we augment the state to be $\{x_1,x_2,x_3,x_4,s\}$ and append an ODE that specifies $\dot{s}(t)$ for $t\in \mathbb{R}^+$ the equations ~\ref{eq:num_sim}  can be written in the form of equations ~\ref{eq:simple_plant}.Then the finite dimensional set of coupled ODE's for the adaptive estimation problem can be written in terms  of the plant dynamics, estimator equation, and the learning law which are of the form shown in Equations \ref{eq:f}, \ref{eq:a2}, and \ref{eq:a3} respectively.

    \subsection{Synthetic Road Profile}
    The constants in the equation are initialized as follows:  $m_1=0.5$ kg, $m_2=0.5$ kg, $k_1=50000$ N/m, $k_2=30000$ N/m and $c_2=200$ Ns/m, $\Gamma=0.001$. 
The radius of the path traversed $R=4$ m, the road profile to be estimated is assumed to have the shape $f(\cdot)= \kappa\sin(2\pi \nu (\cdot))$ where $\nu =0.04$ Hz and $\kappa=2$.  
Thus our adaptive estimation problem is formulated  for a synthetic road profile in the RKHS $H = \overline{\{k_x(\cdot)|x\in \Omega\}}$ with  $k_x(\cdot)=e^\frac{-\|x-{\cdot} \|^2}{2\sigma^2 }$.
The radial basis functions, each with standard deviation of $\sigma=50$, span over the range of $25^o$ with their centers $s_i$ evenly separated along the arc length. It is important to note that we have chosen a scattered basis that can be located at any collection of centers $\{s_i\}_{i=1}^{n}\subseteq \Omega$ but the uniformly spaced centers are selected to illustrate the convergence rates. 
\begin{figure}[h!]
\centering
\includegraphics[scale=0.45]{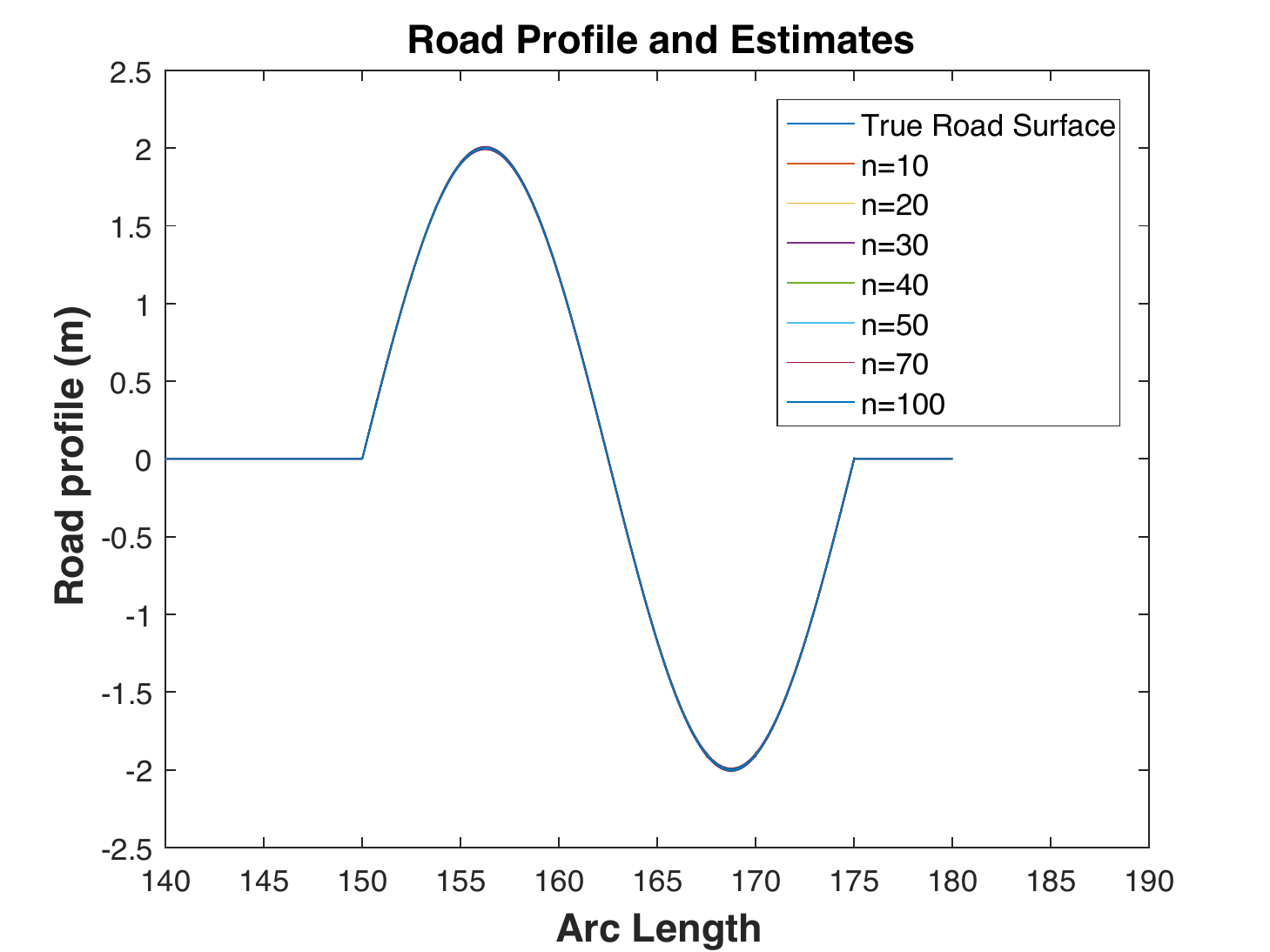}
\caption{Road surface estimates for $n=\{10,20,\cdots,100\}$}
\label{fig:Sine Road}
\end{figure}
Fig.\ref{fig:Sine Road} shows the finite dimensional estimates $\hat{f}$ of the road and the true road surface $f$ for different number of basis kernels ranging from $n=\{10,20,\cdots,100\}$. 
\begin{figure}[h!]
\centering
\begin{tabular}{cc}
\includegraphics[width=.5\textwidth]{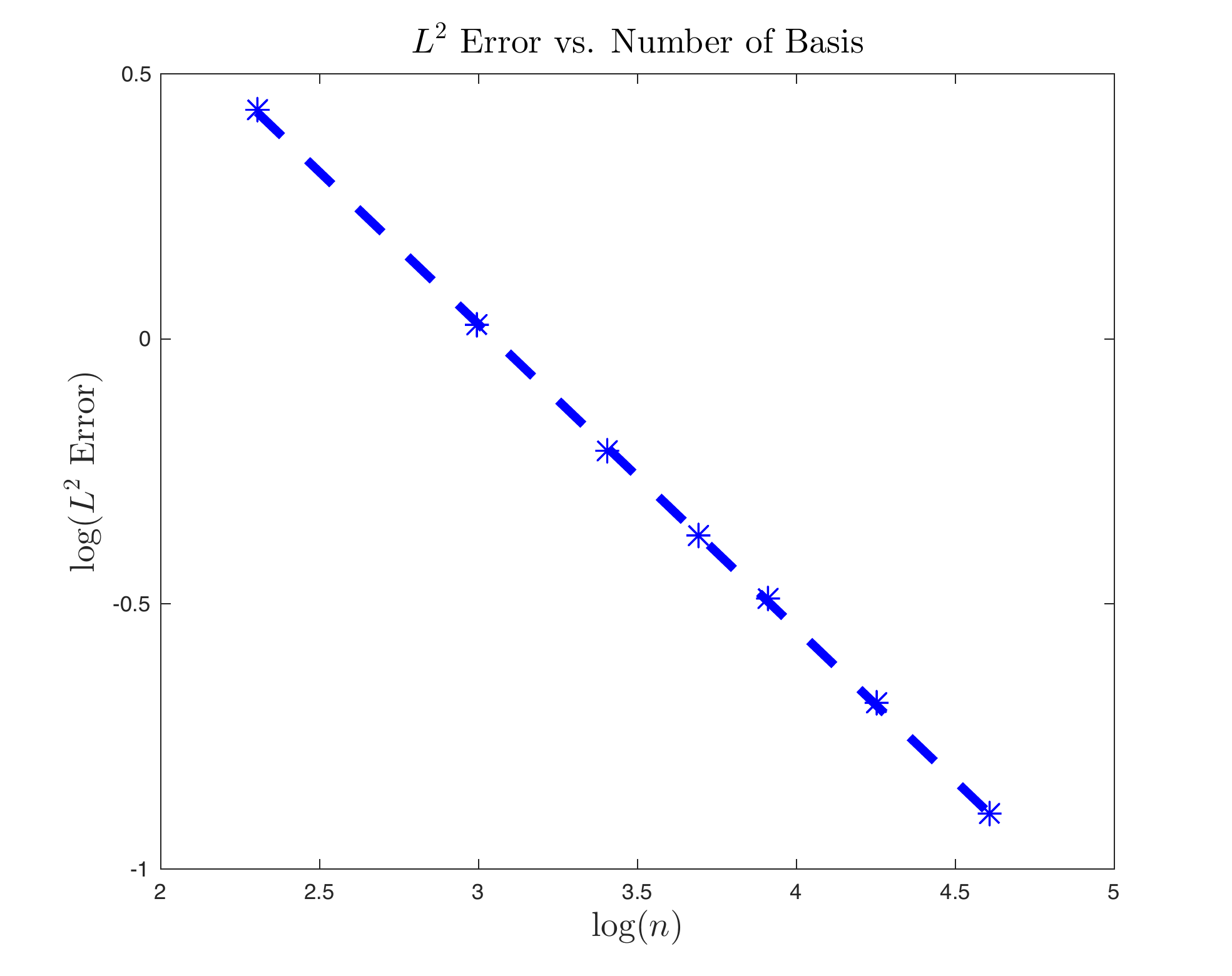}
&
\includegraphics[width=.5\textwidth]{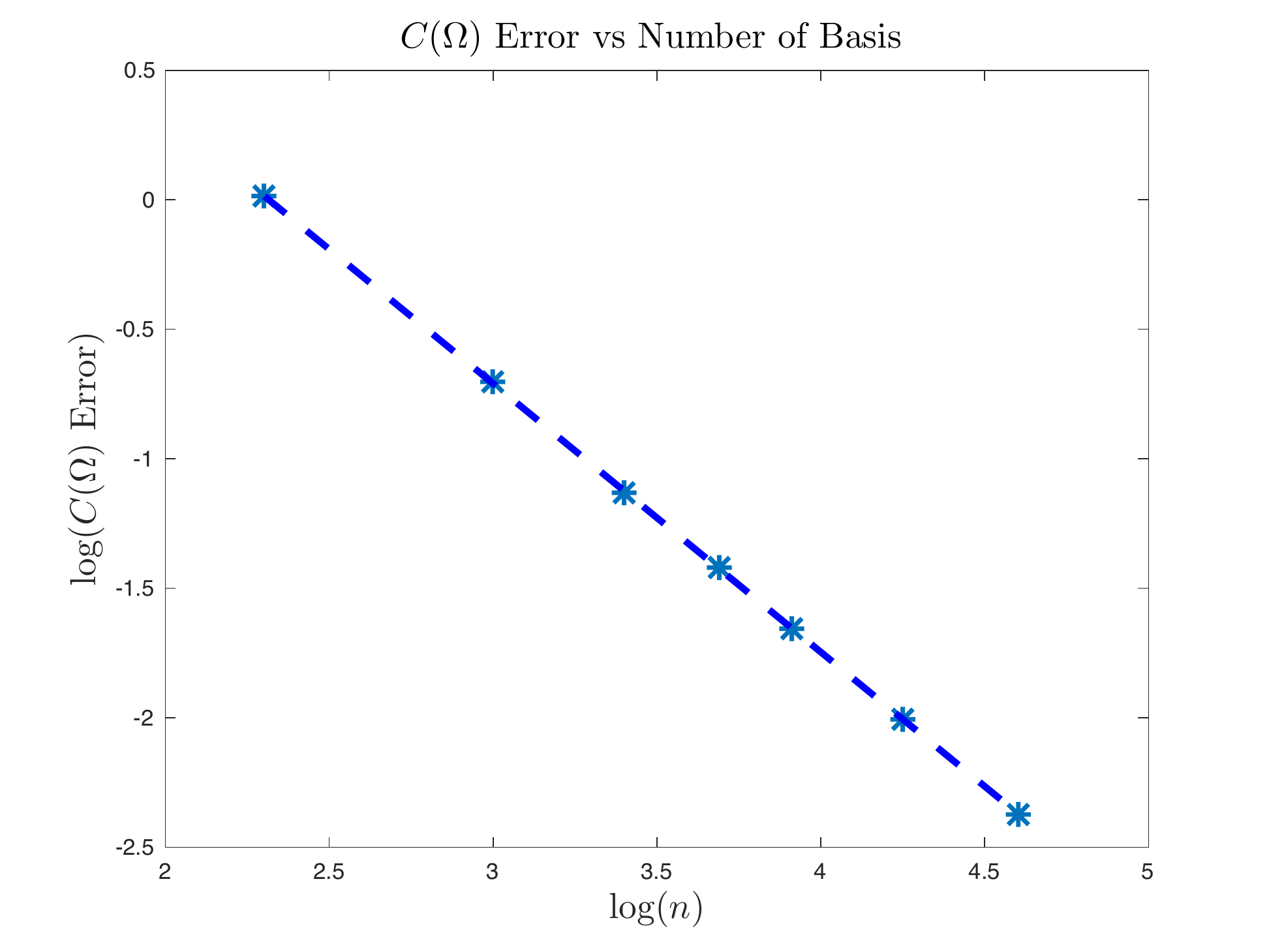}\\
\end{tabular}
\caption{Convergence rates using Gaussian kernel for synthetic data}
\label{fig:logsup}
\end{figure}
The plots in Fig.\ref{fig:logsup} show the rate of convergence of $L^2$ error and the $C(\Omega)$ error with respect to the number of basis functions. The {\em{log}} along the axes in the figures refer to the natural logarithm unless explicitly specified.

\subsection{Experimental Road Profile Data}
The road profile to be estimated in this subsection is based on the experimental data obtained from the Vehicle Terrain Measurement System shown in Fig.~\ref{fig:circle}.  The constants in the estimation problem are initialized to the same numerical values as in previous subsection.
\begin{figure}[h]
\centering
\begin{tabular}{cc}
\includegraphics[width=0.4\textwidth]{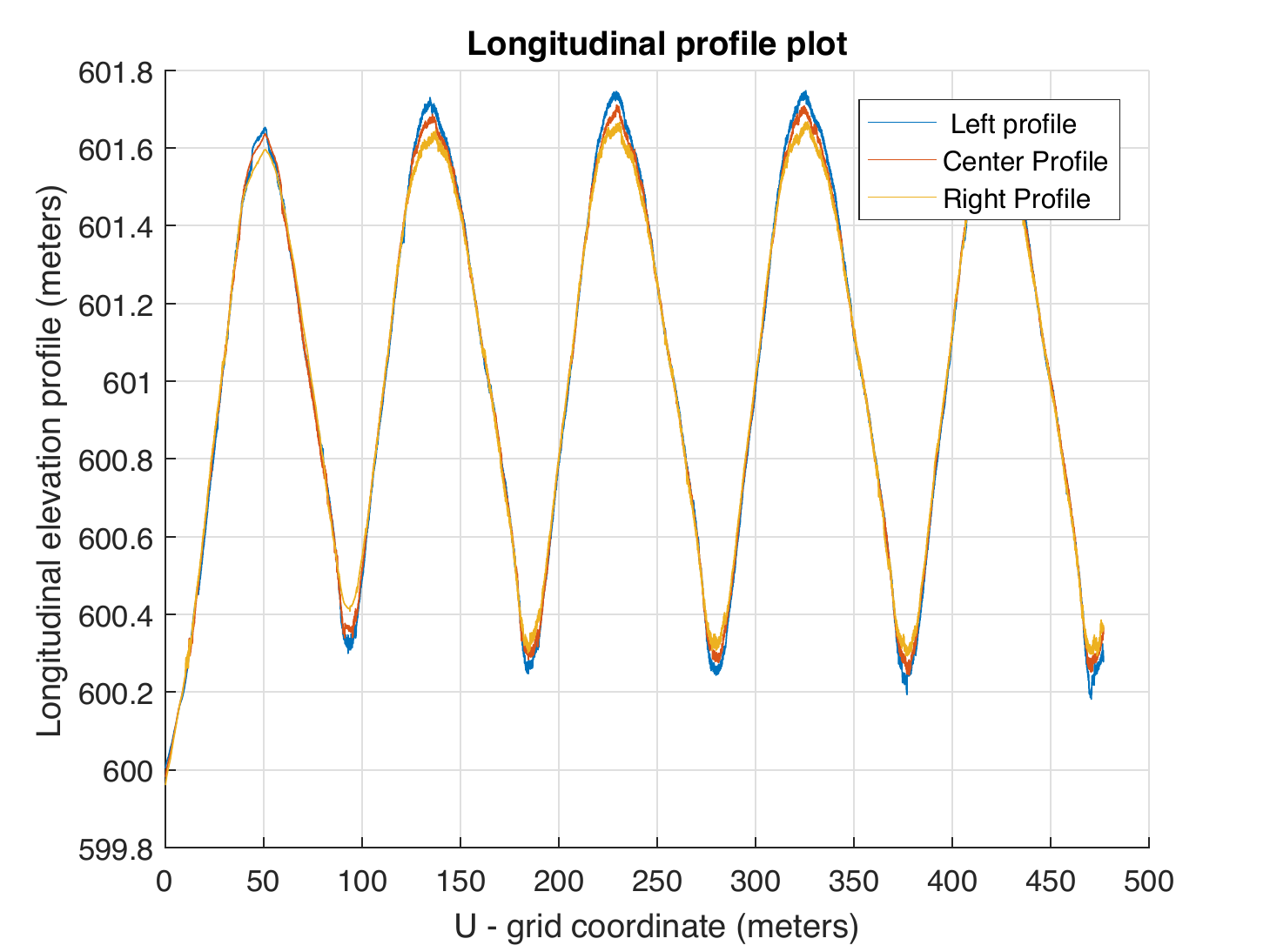}
&
\includegraphics[width=0.4\textwidth]{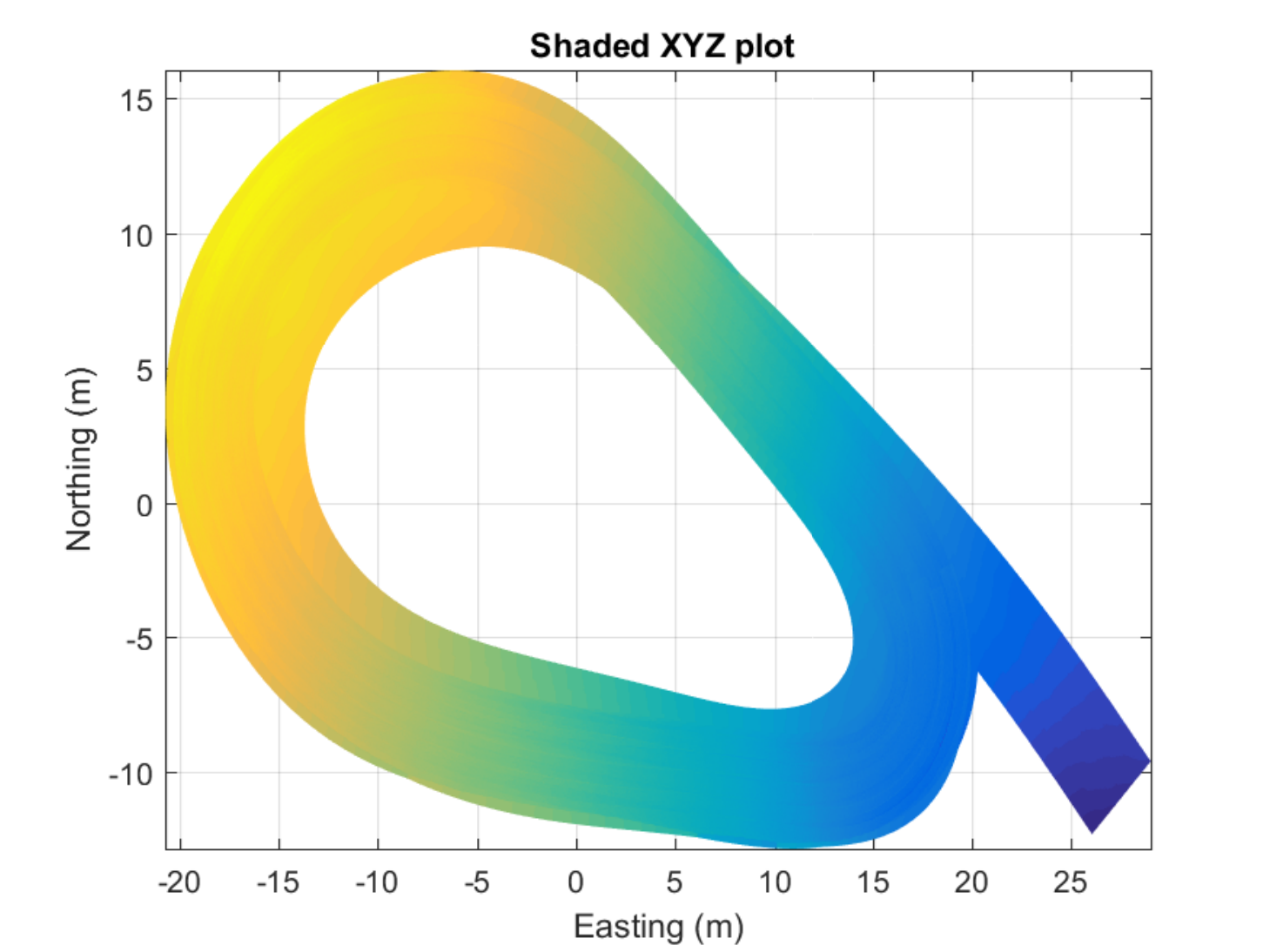}\\
{Longitudinal Elevation Profile.}
&
{Circular Path followed by VTMS.}
\end{tabular}
\caption{Experimental Data From VTMS.}
\label{fig:circle}
\end{figure}
In the first study in this section the adaptive estimation problem is formulated in the RKHS $H = \overline{k_x(\cdot)|x\in \Omega\}}$ with  $k_x(\cdot)=e^\frac{-\|x-{\cdot}\|^2}{2\sigma^2 }$. The radial basis functions, each with standard deviation of $\sigma=50$, span over the range of with a collection of centers located at $\{s_i\}_{i=1}^{n}\subseteq \Omega$ evenly separated along the arclength. This is  repeated for kernels defined using B-splines of first order and second order respectively. 

Fig.\ref{fig:Kernels} shows the finite dimensional estimates of the road and the true road surface $f$ for a data representing single lap around the circular track, the finite dimensional estimates $\hat{f}_n$ are plotted for different number of basis kernels ranging from $n=\{35,50,\cdots,140\}$ using the Gaussian kernel as well as the second order B-splines. 
The finite dimensional estimates $\hat{f}_n$ of the road profile and the true road profile $f$ for data collected representing multiple laps around the circular track is plotted for the first order B-splines as shown in Fig.~\ref{fig:Lsplines Road}. The plots in Fig.~\ref{fig:sup_error_compare} show the rate of convergence of the $L^2$ error and the $C(\Omega)$ error with respect to number of basis functions. 
It is seen that the rate of convergence for $2^{nd}$ order B-Spline is better as compared to other kernels used to estimate in these examples. This corroborates the fact that smoother kernels are expected to have better convergence rates. 

Also, the condition number of the Grammian matrix varies with $n$, as illustrated in Table.\ref{table:1} and Fig.\ref{fig:conditionnumber}. This is an important factor to consider when choosing a specific kernel for the RKHS embedding technique since it is well known that the error in numerical estimates of solutions to linear systems is bounded above by the condition number. The implementation of the RKHS embedding method requires such a solution that depends on the grammian matrix of the kernel bases at each time step. We see that the condition number of Grammian matrices for exponentials is $\mathcal{O}(10^{16})$ greater than the corresponding matrices for splines. Since the sensitivity of the solutions of linear equations is bounded by the condition numbers, it is expected that the use of exponentials could suffer from a severe loss of accuracy as the dimensionality increases. The development for preconditioning techniques for Grammian matrices constructed from radial basis functions to address this problem is an area of active research.
\begin{figure}[H]
\centering
\begin{tabular}{cc}
\includegraphics[width = 0.4 \textwidth]{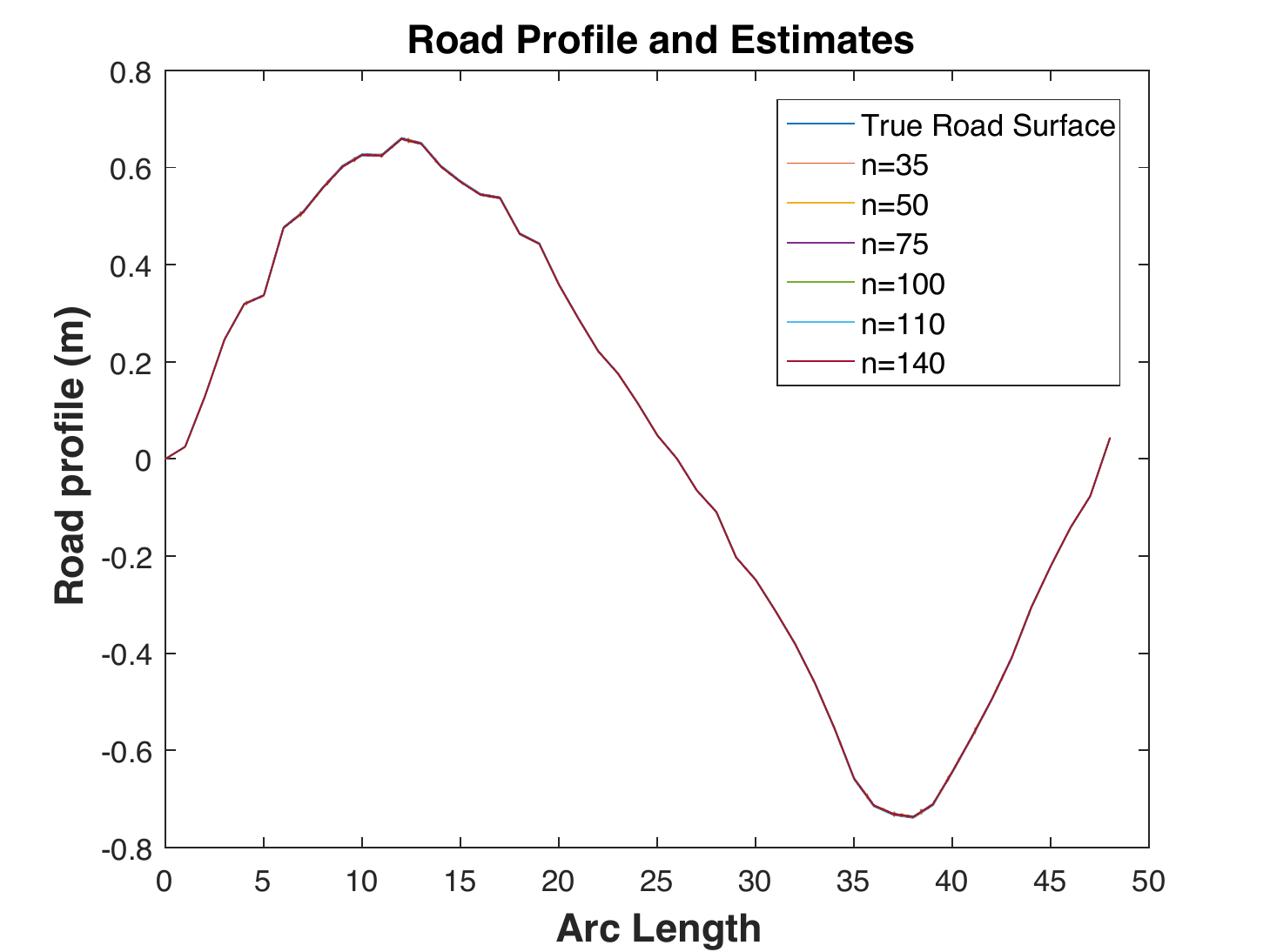}
&
\includegraphics[width = 0.4 \textwidth]{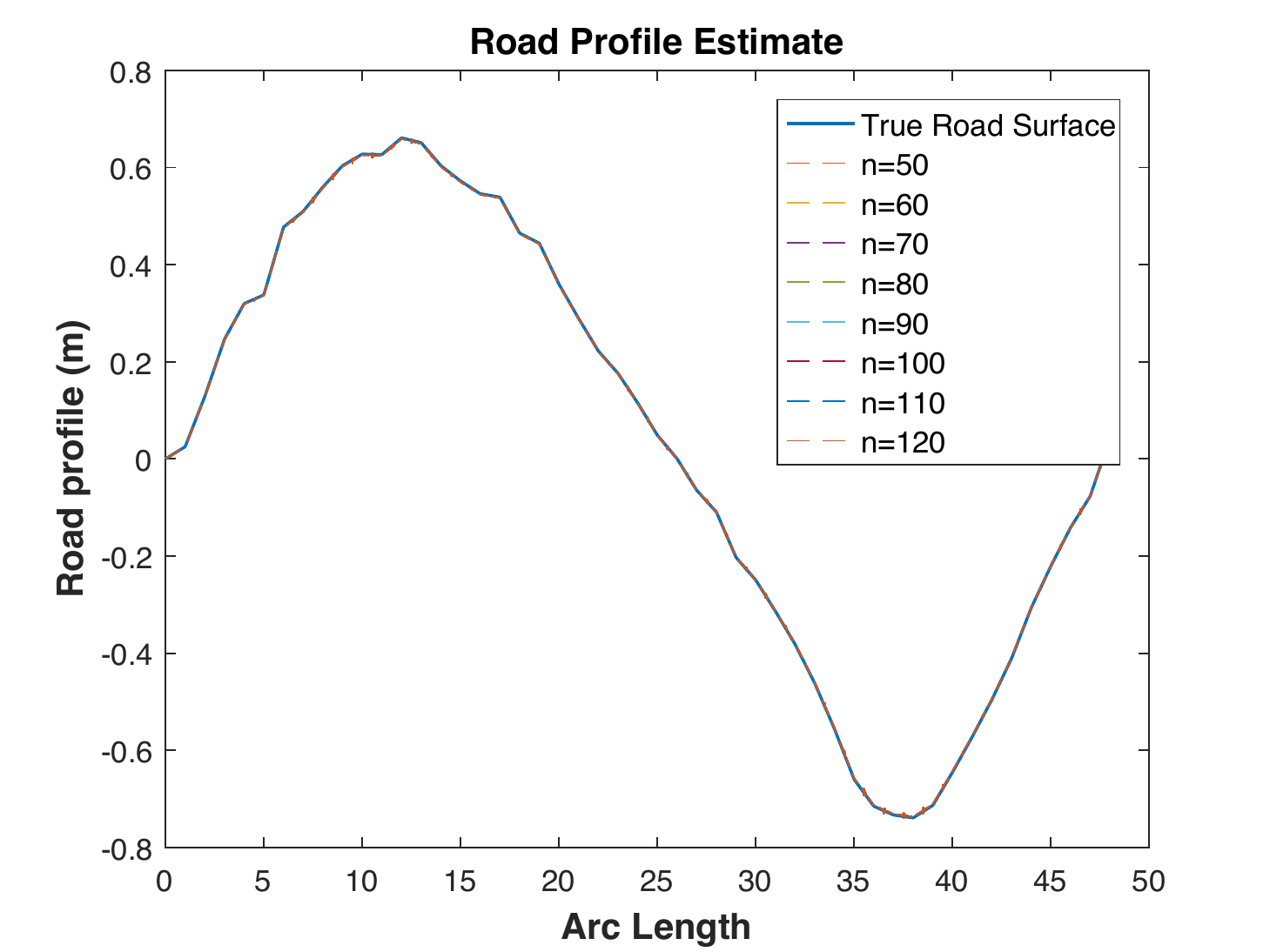}\\
{Road surface estimates for Gaussian kernels}
&
{Road surface estimate for second-order B-splines}
\end{tabular}
\caption{Road surface estimates for single lap}
\label{fig:Kernels}
\end{figure}

\begin{figure}[H]
\centering
\includegraphics[width = 0.4 \textwidth]{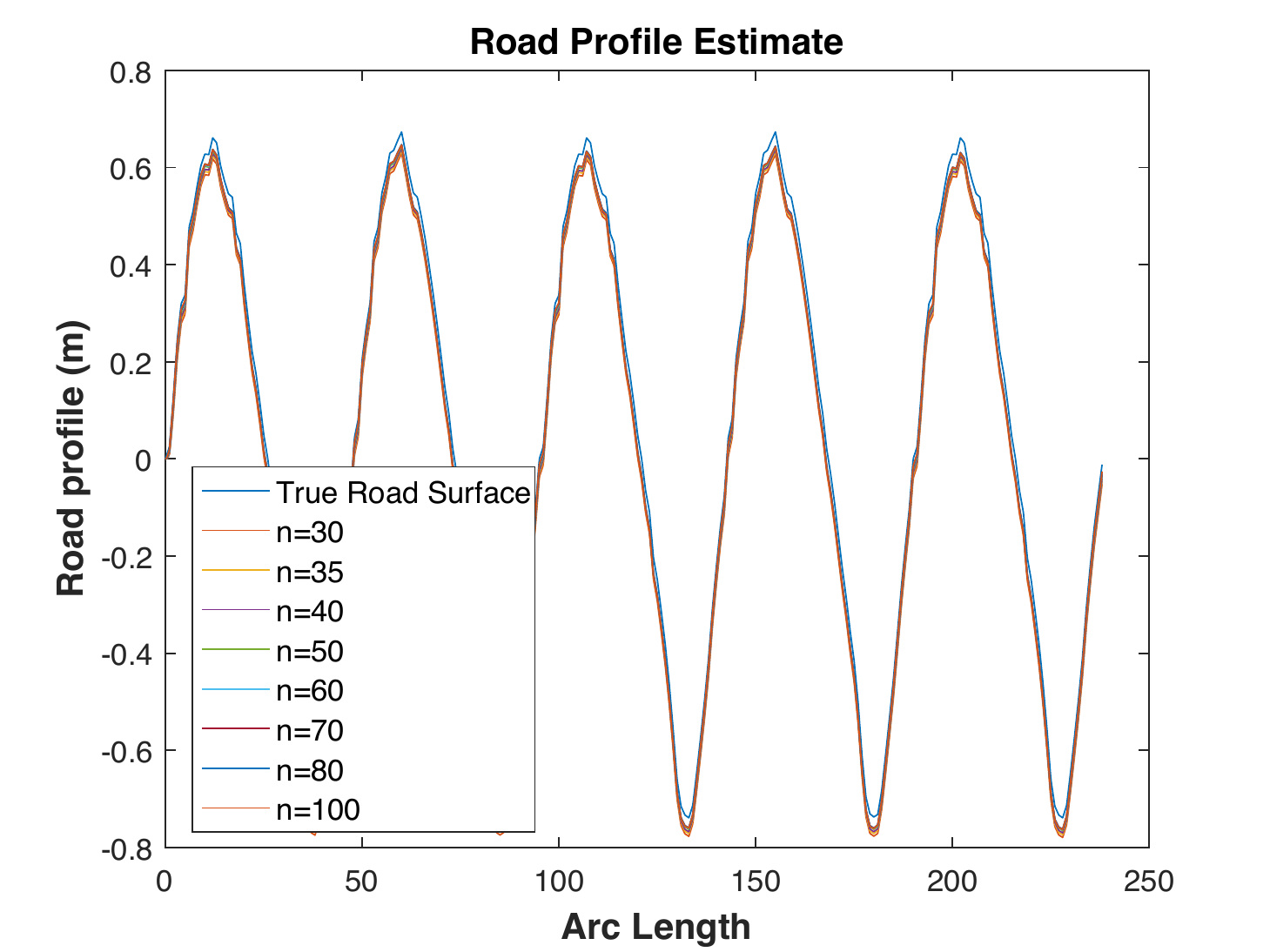}
\caption{Road surface estimate using first-order B-splines}
\label{fig:Lsplines Road}
\end{figure}
\begin{center}
\begin{figure}[H]
\centering
\begin{tabular}{cc}
\includegraphics[width = 0.5\textwidth]{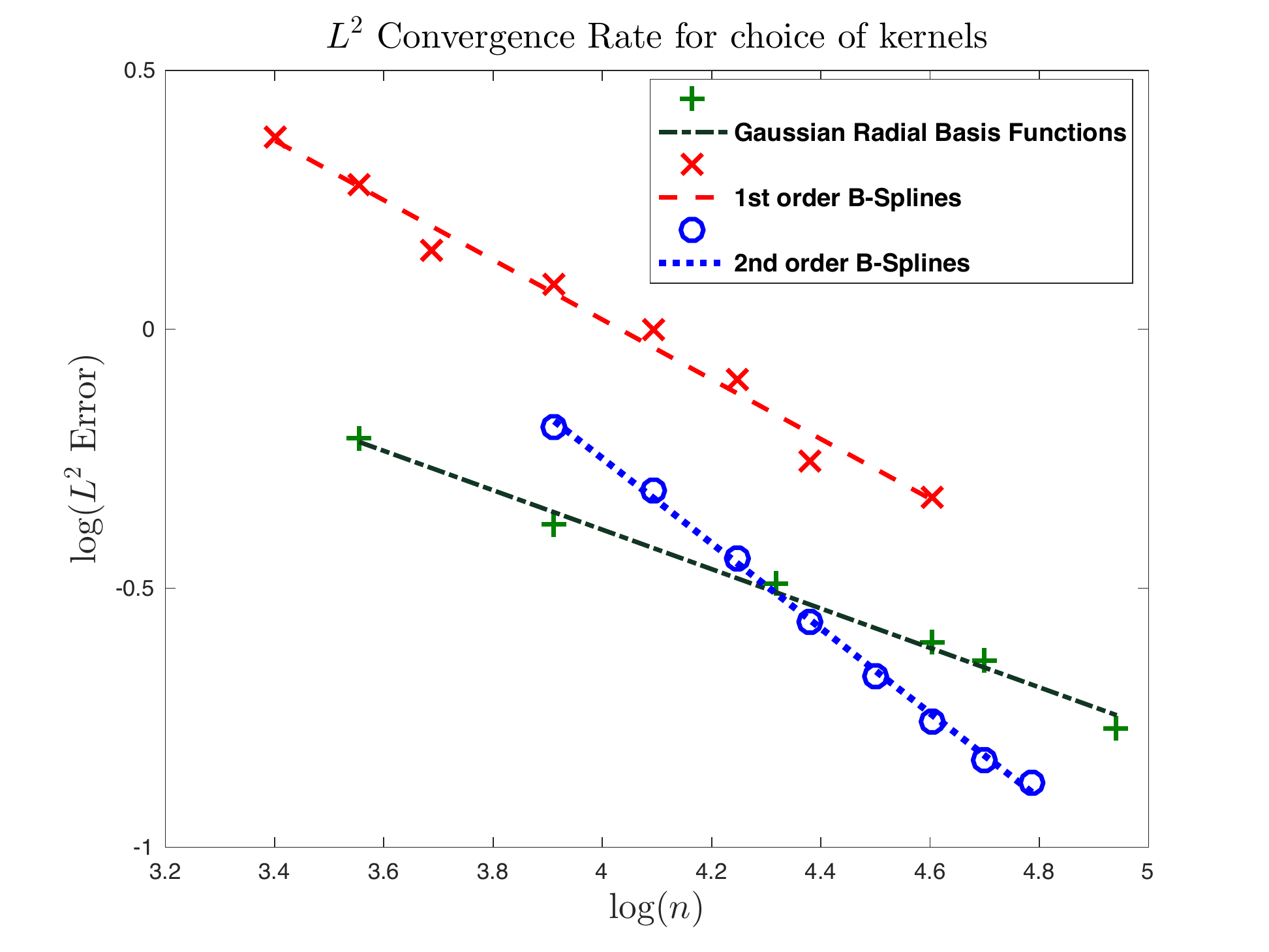}
&
\includegraphics[width = 0.5\textwidth]{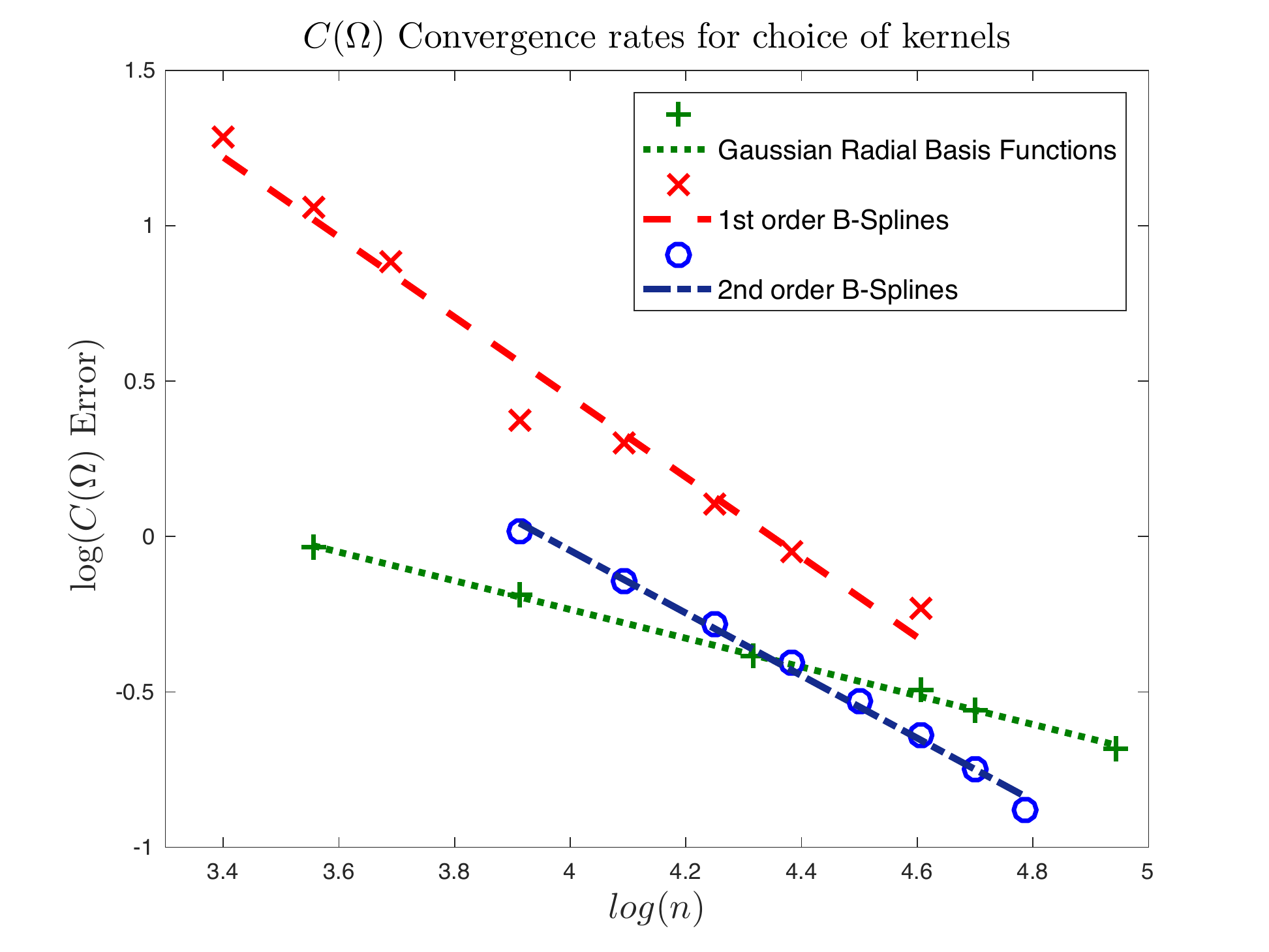}\\
\end{tabular}
\caption{Convergence rates for different kernels}
\label{fig:sup_error_compare}
\end{figure}
\end{center}
\begin{center}
\centering
\begin{table}[H]
\centering
\begin{tabular}{|p{1cm}|p{2.2cm}|p{2.2cm}|p{2.2cm}|}
\hline
No. of Basis Functions &  Condition No. (First order B-Splines) $\times 10^3$ & Condition No.(Second order B-Splines) $\times 10^4$ & Condition No.(Gaussian Kernels) $\times 10^{20}$\\ 
 \hline \hline
 10 & 0.6646 & 0.3882 & 0.0001 \\ 
 20 & 1.0396 & 0.9336 & 0.0017  \\
 30 & 1.4077 & 1.5045 & 0.0029  \\
 40 & 1.7737 & 2.0784 & 0.0074 \\
 50 & 2.1388 & 2.6535 & 0.0167\\ 
 60 & 2.5035 & 3.2293 & 0.0102\\ 
 70 & 2.8678 & 3.8054& 0.0542\\ 
 80 & 3.2321 & 4.3818& 0.0571\\ 
 90 & 3.5962 & 4.9583& 0.7624\\ 
 100 & 3.9602 & 5.5350& 1.3630\\ 
 \hline
\end{tabular}
\caption{Condition number of Grammian Matrix vs Number of Basis Functions}
\label{table:1}
\end{table}
\end{center}
\begin{figure}[H]
\centering
\includegraphics[height=0.3\textheight,width=0.65\textwidth]{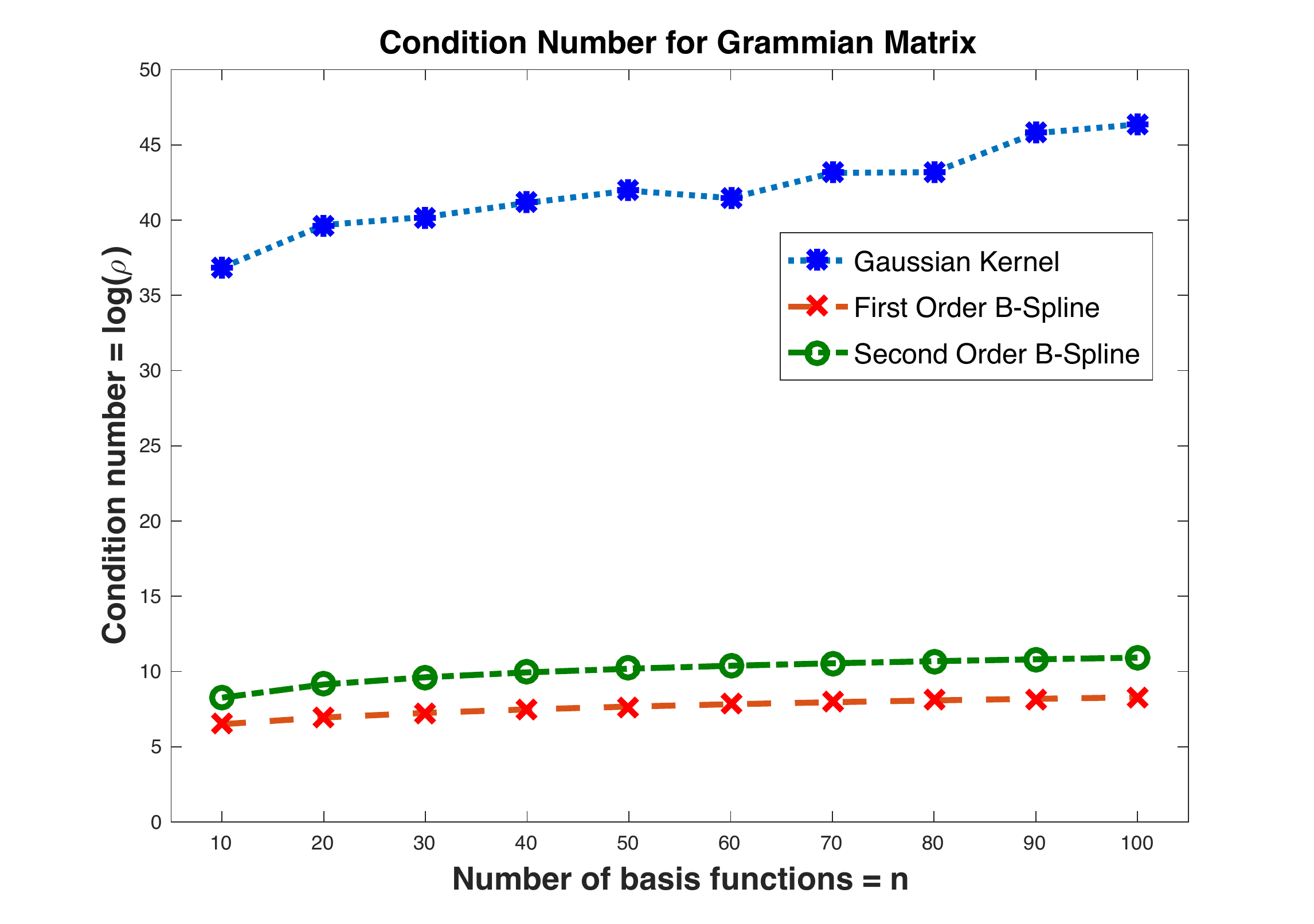}
\caption{Condition Number of Grammian Matrix vs Number of Basis Functions}
\label{fig:conditionnumber}
\end{figure}
\vspace{-1cm}
	\section{Conclusions}
	\label{sec:conclusions}
    In this paper, we introduced a novel framework based on the use of RKHS embedding to study online adaptive estimation problems. The applicability of this framework to solve estimation problems that involve high dimensional scattered data approximation provides the motivation for the theory and algorithms described in this paper. A quick overview of the background theory on RKHS enables rigorous derivation of the results in  Sections \ref{sec:existence} and \ref{sec:finite}. In this paper we derive (1) the  sufficient conditions for the existence and uniqueness of solutions to the RKHS embedding problem, (2) the  stability and convergence of the state estimation error, and (3) the convergence of the finite dimensional approximate solutions  to  the solution of the  infinite dimensional state space. To illustrate the utility of this approach, a simplified numerical example of adaptive estimation of a road profile is studied and the results are critically analyzed. It would be of further interest to see the ramifications of using multiscale kernels to achieve semi-optimal convergence rates for functions in a scale of Sobolev spaces. It would likewise be important to extend this framework  to adaptive control problems and examine the consequences of {\em persistency of excitation} conditions in the RKHS setting, and further extend the approach to adaptively generate bases over the state space. 
    
    \bibliography{rkhs.bib} 
    \bibliographystyle{unsrt}
\end{document}